\newtheorem{theorem}{Theorem}
\newtheorem{lemma}{Lemma}
\newtheorem{obs}[theorem]{Observation}
\newtheorem{claim}[theorem]{Claim}
\newtheorem{conjecture}[theorem]{Conjecture}
\newenvironment{definition}[1][Definition]{\begin{trivlist}
		\item[\hskip \labelsep {\bfseries #1}]}{\end{trivlist}}	
\newenvironment{example}[1][Example]{\begin{trivlist}
		\item[\hskip \labelsep {\bfseries #1}]}{\end{trivlist}}
\newcommand{\doi}[1]{\url{http://dx.doi.org/#1}}
\newcommand{\R}{\ensuremath{\mathbb{R}}}
\newcommand{\Z}{\ensuremath{\mathbb{Z}}}
\newcommand{\bm}{\mathbf}
\newcommand\bfootnote[1]{\begingroup\renewcommand\thefootnote{}\footnote{#1}\addtocounter{footnote}{-1}\endgroup}
\DeclareMathOperator*{\argmax}{arg\,max}
\title{Dynamics at the Boundary of Game Theory and Distributed Computing}
\author{Aaron D. Jaggard\\U.S. Naval Research Laboratory
\and
Neil Lutz\\Rutgers University
\and
Michael Schapira\\Hebrew University of Jerusalem
\and
Rebecca N. Wright\\Rutgers University}
\begin{document}
	\maketitle
	\bfootnote{This is the authors' version of this work.  The definitive version will be published in \emph{ACM Trans. Econ. Comput.} A preliminary version of some of this work appeared in \emph{Proceedings of the Second Symposium on Innovations in Computer Science ({ICS} 2011)}~\cite{jsw11ics}. Authors' addresses:
	A. D. Jaggard, Formal Methods Section (Code 5543), U.S. Naval Research Laboratory, email: aaron.jaggard@nrl.navy.mil; part of this work was carried out while Jaggard was at DIMACS, Rutgers University and visiting Colgate University;
	N. Lutz, Department of Computer Science, Rutgers University, email:njlutz@rutgers.edu;
	M. Schapira, School of Computer Science and Engineering, Hebrew University of Jerusalem, email: schapiram@huji.ac.il;
	R. N. Wright, Department of Computer Science and DIMACS, Rutgers University, email: rebecca.wright@rutgers.edu.}
	\begin{abstract}
We use ideas from distributed computing and game theory to study dynamic and decentralized environments in which computational nodes, or decision makers, interact strategically and with limited information. In such environments, which arise in many real-world settings, the participants act as both economic and computational entities. We exhibit a general non-convergence result for a broad class of dynamics in asynchronous settings. We consider implications of our result across a wide variety of interesting and timely applications: game dynamics, circuit design, social networks, Internet routing, and congestion control. We also study the computational and communication complexity of testing the convergence of asynchronous dynamics.
Our work opens a new avenue for research at the intersection of distributed computing and game theory.
\end{abstract}
	\section{Introduction}
Dynamic environments where decision makers repeatedly interact arise in a variety of settings, such as Internet protocols, large-scale markets, social networks, and multi-processor computer architectures. Study of these environments lies at the boundary of game theory and distributed computing. The decision makers are both strategic entities with individual economic preferences and computational entities with limited resources, working in a decentralized and uncertain environment. To understand the global behaviors that result from these interactions---the \emph{dynamics} of these systems---we draw on ideas from both disciplines.

The notion of \emph{self stabilization} to a ``legitimate'' state in a distributed system parallels that of convergence to an equilibrium in a game. The foci, however, are very different. In game theory, there is extensive research on dynamics that result from what is perceived as natural strategic decision making (e.g., best- or better-response dynamics, fictitious play, or regret minimization). Even simple heuristics that require little information or computational resources can yield sophisticated behavior, such as the convergence of best-response dynamics to equilibrium points (see \cite{H05} and references therein). These positive results for simple game dynamics are, with few exceptions (see Section~\ref{sec:related}), based on the sometimes implicit and often unrealistic premise of a controlled environment in which actions are synchronous and coordinated. Distributed computing research emphasizes the environmental uncertainty that results from decentralization, but has no notion of ``natural'' rules of behavior. It has long been known that environmental uncertainty---in the form of both asynchrony~\cite{FLP85,Lyn89} and arbitrary initialization~\cite{Dole00}---introduces substantial difficulties for protocol termination in distributed systems. Our work bridges the gap between these two approaches by initiating the study of game dynamics in distributed computing settings. We take the first steps of this research agenda, focusing primarily on systems in which the decision makers, or \emph{computational nodes}, are deterministic and have \emph{bounded recall}, meaning that their behavior is based only on the ``recent history'' of system interaction. Our model is asynchronous in the sense of allowing, at every timestep, an adversarially chosen subset of nodes to be activated.

Our main contribution is a general impossibility result (Theorem~\ref{thm:main}) for asynchronous environments, showing that a large and natural class of bounded-recall dynamics can fail to converge  whenever there are at least two ``equilibrium points'' of the dynamics. We prove this result using a \emph{valency} argument (a now-standard technique in distributed computing theory~\cite{Lyn89,FR03}). We discuss the implications of this result for game dynamics and describe its applications to asynchronous circuit design, social networks, interdomain routing protocols such as BGP, and congestion control in networks. We also explore the impact on convergence guarantees of asynchrony that is bounded, and we present complexity hardness results for checking whether an asynchronous system will always converge: we show that it is PSPACE-hard and requires exponential communication.
	\section{Related Work} \label{sec:related}

Our work relates to many ideas in game theory and in distributed computing.  Here, we discuss game-theoretic work on the dynamics of simple strategies and on asynchrony, distributed computing work on fault tolerance and self stabilization, and other connections between game theory and computer science (for more, see~\cite{Halp03}). We also highlight the application areas we consider.

\vspace{0.04in}\noindent{\bf Algorithmic game theory.} Since our work draws on both game theory and computer science, it may be considered part of the broader research program of algorithmic game theory (AGT), which merges concepts and ideas from those two fields~\cite{NRTV07}. Three main areas of study in AGT have been algorithmic mechanism design, which applies concepts from computer science to economic mechanism design~\cite{NisRon01}; the ``price of anarchy,'' which describes the efficiency of equilibria and draws on approximability research~\cite{KouPap09}; and algorithmic and complexity research on the computation of equilibria~\cite{Papa94}. Analyzing the computational power of learning dynamics in games has been of particular interest (see, e.g.,~\cite{DFPPV10,KLPT11,BaCoMe12,PapPil16}). Our work creates another link between game theory and computer science by drawing on two previously disjoint areas, self-stabilization in distributed computing theory and game dynamics, to explore broader classes of dynamics operating in adversarial distributed environments.

\vspace{0.04in}\noindent{\bf Adaptive heuristics.} Much work in game theory and economics deals with \emph{adaptive heuristics} (see~\cite{H05} and references therein). Generally speaking, this long line of research explores the ``convergence'' of simple and myopic rules of behavior (e.g., best-response/fictitious-play/no-regret dynamics) to an ``equilibrium''. However, with few exceptions (see below), such analysis has so far primarily concentrated on synchronous environments in which steps take place simultaneously or in some other predetermined order. In this work, we explore dynamics of this type in asynchronous environments, which are more realistic for many applications.

\vspace{0.04in}\noindent{\bf Game-theoretic work on asynchronous environments.} Some game-theoretic work on repeated games considers ``asynchronous moves.'' Often, as in \cite{marden2012revisiting}, this asynchrony merely indicates that players are not all activated at each time step, and thus is used to describe environments where only one player is activated at a time (``alternating moves''), or where there is a probability distribution that determines which player(s) are activated at each timestep. Other work does not explore the behavior of dynamics, but has other research goals (e.g., characterizing equilibria, establishing folk theorems); see \cite{LM97,Y04}, among others, and references therein.
To the best of our knowledge, we are the first to study the effects of asynchrony (in the broad distributed computing sense) on the convergence of \emph{game dynamics} to equilibria.

\vspace{0.04in}\noindent{\bf Fault-tolerant computation.} We use ideas and techniques from work in distributed computing on protocol termination in asynchronous computational environments where nodes and communication channels are possibly faulty. Protocol termination in such environments, initially motivated by multi-processor computer architectures, has been extensively studied
in the past three decades~\cite{FLP85,BenOr86,DDS87,BG93,HS99,SZ00}, as nicely surveyed in~\cite{Lyn89,FR03}. Fischer, Lynch and Paterson~\cite{FLP85} showed, in a landmark paper, that a broad class of failure-resilient consensus protocols cannot provably terminate. Intuitively, the risk of protocol non-termination in that work stems from the possibility of failures; a computational node cannot tell whether another node is silent due to a failure or is simply taking a long time to react. Our non-convergence result, by contrast, applies to failure-free environments. In game-theoretic work that incorporated fault tolerance concepts, Abraham et al.~\cite{adgh06podc} studied equilibria that are robust to defection and collusion.

\vspace{0.04in}\noindent{\bf Self stabilization.} The concept of self stabilization is fundamental to distributed computing and dates back to Dijkstra 1974~\cite{Dijkstra74} (see~\cite{Dole00} and references therein). Convergence of dynamics to an ``equilibrium'' in our model can be viewed as the self stabilization of such dynamics (where the ``equilibrium points'' are the legitimate configurations). Our formulation draws ideas from work in distributed computing (e.g., Burns' distributed daemon model~\cite{Burns87}) and in networking research~\cite{GSW02} on self stabilization.

\vspace{0.04in}\noindent{\bf Applications.}
We discuss the implications of our non-convergence result across a wide variety of applications, that have previously been studied: convergence of game dynamics (see, e.g.,~\cite{HarMas03,HarMas06}); asynchronous circuits (see, e.g.,~\cite{DN97}); diffusion of innovations, behaviors, etc., in social networks (see~\cite{Morris00,IKMW07}); interdomain routing~\cite{GSW02,SSZ09}; and congestion control~\cite{gszs10sigmetrics}.
	\section{Asynchronous Dynamic Interaction}\label{sec:asynchronous}

In this section we present our model of asynchronous dynamic interaction. Intuitively, an \emph{interaction system} consists of a collection of computational nodes, each capable of selecting \emph{actions} that are visible to the other nodes. The \emph{state} of the system at any time consists of each node's current action. Each node has a deterministic \emph{reaction function} that maps system histories to actions. At every discrete timestep, each node activated by a schedule simultaneously applies its deterministic reaction function to select a new action, which is immediately visible to all other nodes.

\begin{definition}
	An \emph{interaction system} is characterized by a tuple $(n,A,\bm{f})$:
	\begin{itemize}
		\item The system has $n\in\Z_+$ \emph{computational nodes}, labeled $1,\ldots,n$.
		\item $A=A_1\times\ldots\times A_n$, where each $A_i$ is a finite set called the \emph{action space} of node $i$. $A$ is called the \emph{state space} of the system, and a \emph{state} is an $n$-tuple $\bm{a}=(a_1,\ldots,a_n)\in A$. A \emph{history} of the system is a nonempty finite sequence of states, $H\in A^\ell$, for some $\ell\in\Z_+$. The set of all histories is $A^+=\bigcup_{\ell\in\Z_+}A^\ell$.
		\item $\bm{f}:A^+\to A$ is a function given by $\bm{f}(H)=(f_1(H),\ldots,f_n(H)),$ where $f_i:A^+\to A_i$ is called node $i$'s \emph{reaction function}.
	\end{itemize}
\end{definition}

We now describe the asynchronous dynamics of our model, i.e., the ways that a system's state can evolve due to interactions between nodes. Informally, there is some initial state, and, in each discrete time step $1,2,3,\ldots$, a subset of the nodes are \emph{activated} according to a \emph{schedule}. The nodes that are activated in a given timestep react simultaneously; each applies its reaction function to the current state to choose a new action. This updated action is \emph{immediately observable} to all other nodes.\footnote{This model has ``perfect monitoring.'' While this is clearly unrealistic in some important real-life contexts (e.g., some of the environments considered in Section~\ref{ssec:examples}), this restriction only strengthens our main results, which are impossibility results.} 

\begin{definition}
	Let $S\subseteq[n]$ be a set of nodes. Define the function $\bm{f}_S:A^+\to A$ by
	$\bm{f}_S(H)=(\hat{f}_1(H),\ldots,\hat{f}_n(H))$, where each function $\hat{f}_i:A^{+}\to A_i$ is given by
	\[\hat{f}_i(\bm{a}^0,\ldots,\bm{a}^{\ell})=\left\{\begin{array}{ll}f_i(\bm{a}^0,\ldots,\bm{a}^{\ell})&\text{if }i\in S\\a_i^{\ell}&\text{otherwise}\,.\end{array}\right.\]
	A \emph{schedule} is a function $\sigma:\Z_+\to2^{[n]}$ that maps each $t$ to a (possibly empty) subset of the computational nodes.\footnote{$[n]$ denotes $\{1,\ldots,n\}$, and for any set $S$, $2^S$ is the set of all subsets of $S$.}  If $i\in\sigma(t)$, then we say that node $i$ is \emph{activated} at time $t$.
\end{definition}

Since the reaction functions are deterministic, an initial history and a schedule completely determine the resulting infinite state sequence; we call these state sequences \emph{trajectories}.

\begin{definition}
	Let $H=(\bm{a}^0,\ldots,\bm{a}^{\ell})\in A^+$ be a history, and let $\sigma$ be a schedule.
	The \emph{$(H,\sigma)$-trajectory} of the system is the infinite sequence $\bm{a}^0,\bm{a}^1,\bm{a}^2,\ldots$ extending $H$ such that for every $t>\ell$,
	\begin{equation*}
	\bm{a}^{t}=\bm{f}_{\sigma(t)}(\bm{a}^{0},\ldots,\bm{a}^{t-1})
	\end{equation*}
	The history $(\bm{a}^0,\ldots,\bm{a}^{t-1})$ is the length-$t$ \emph{prefix} of the $(H,\sigma)$-trajectory.
\end{definition}

\subsection{Fairness and Convergence}\label{ssec:fairconv}

Our main theorem is an impossibility result, in which we show that an adversarially chosen initial history and schedule can prevent desirable system behavior. Notice that an arbitrary schedule might never allow some or all nodes to react, or might stop activating them after some time. Hence, we limit this adversarial power (thereby strengthening our impossibility result) by restricting our attention to \emph{fair} schedules, which never permanently stop activating any node.

\begin{definition}\label{def:fair}
	A \emph{fair schedule} is a schedule $\sigma$ that activates each node infinitely many times, i.e., for each $i\in[n]$, the set $\{t\in\Z_+:i\in\sigma(t)\}$ is infinite. A \emph{fair trajectory} is one that is the $(H,\sigma)$-trajectory for some history $H$ and some fair schedule $\sigma$.
\end{definition}

We are especially interested in whether a system's fair trajectories \emph{converge}, eventually remaining at a single state forever.
\begin{definition}\label{def:conv}
	A trajectory $\bm{a}^0,\bm{a}^1,\bm{a}^2,\ldots$ \emph{converges} to a state $\bm{b}$ if there exists some ${T\in\Z_+}$ such that, for all $t>T$, $\bm{a}^t=\bm{b}$. The system is \emph{convergent} if every fair trajectory converges. A state $\bm{b}$ is a \emph{limit state} of the system if some fair trajectory converges to $\bm{b}$.
\end{definition}

Note that it is possible for a trajectory to visit a limit state without converging to that state, meaning that limit states are not necessarily ``stable'' or ``absorbing.'' They may, however, have basins of attraction, in the sense that reaching certain histories might guarantee convergence to a given limit state.

\begin{definition}
	A history $H$ is \emph{committed} to a limit state $\bm{b}$ if, for every fair schedule $\sigma$, the $(H,\sigma)$-trajectory converges to $\bm{b}$. An \emph{uncommitted} history is one that is not committed to any state.
\end{definition}

\subsection{Informational Restrictions on Reaction Functions}
This framework allows for very powerful reaction functions. We now present several possible restrictions on the information they may use. These are illustrated in Fig.~\ref{fig:restrictions}.

Our main theorem concerns systems in which the reaction functions are \emph{self-independent}, meaning that each node ignores its own past and present actions when reacting to the system's state. In discussing self independence, we use the notation
\[A_{-i}=A_1\times\ldots\times A_{i-1}\times A_{i+1}\times\ldots\times A_n\,,\]
the state space of the system when $i$ is ignored. Similarly, for a state $\bm{a}$, $\bm{a}_{-i}\in A_{-i}$ denotes
\[(a_1,\ldots,a_{i-1},a_{i+1},\ldots,a_n)\,,\]
and given a history $H=(\bm{a}^0,\ldots,\bm{a}^{\ell-1})$, we write $H_{-i}$ for $(\bm{a}_{-i}^0,\ldots,\bm{a}_{-i}^{\ell-1})$. Using this notation, we formally define self independence.

\begin{definition}
	A reaction function $f_i$ is \emph{self-independent} if there exists a function $g_i:A_{-i}^+\to A_i$ such that $f_i(H)=g_i(H_{-i})$ for every history $H\in A^+$.
\end{definition}

A reaction function has \emph{bounded recall} if it only depends on recent states.
\begin{definition}\label{def:bddrecall}
	Given $k\in\Z_+$ and a history $H=(\bm{a}^0,\ldots,\bm{a}^{t-1})\in A^t$ with $t\geq k$, the \emph{$k$-history} at $H$ is $H_{|k}:=(\bm{a}^{t-k},\ldots,\bm{a}^{t-1})$, the $k$-tuple of most recent states. A reaction function $f_i$ has \emph{$k$-recall} if it only depends on the $k$-history and the time counter, i.e., there exists a function $g_i:A^{k}\times\Z_+\to A_i$ such that $f_i(H)= g_i(H_{|k},t)$ for every time $t\geq k$ and history $H\in A^t$.
\end{definition}
We sometimes slightly abuse notation by referring to the restricted-domain function $g_i$, rather than $f_i$, as the node's reaction function.

A bounded-recall reaction function is \emph{stationary} if it also ignores the time counter.

\begin{definition}\label{def:stationary}
	We say that a $k$-recall reaction function is \emph{stationary} if the time counter $t$ is of no importance. That is, if there exists a function $g_i:A^{k}\to A_i$ such that $f_i(H)=g_i(H_{|k})$ for every time $t\geq k$ and history $H\in A^t$. A reaction function $f_i$ is \emph{historyless} if $f_i$ is both $1$-recall and stationary. That is, if $f_i$ only depends on the nodes' most recent actions.
\end{definition}

While seemingly very restricted, historyless dynamics capture the prominent and extensively studied best-response dynamics from game theory (as we discuss in Section~\ref{ssec:gamedynamics}). We show in Section~\ref{ssec:examples} that historyless dynamics also encompass a host of other applications of interest, ranging from Internet protocols to the adoption of technologies in social network.

\begin{figure}[h]
	\centering
	\begin{tikzpicture}[scale=1,light node/.style={draw,minimum size=0.75cm,inner sep=0pt},dark node/.style={draw,fill=lightgray,minimum size=0.75cm,inner sep=0pt},loop/.style={looseness=6}]
	\node[dark node] (a1t-3x) {$\ldots$};
	\node[dark node] (a2t-3x) [below=0.1cm of a1t-3x] {$\ldots$};
	\node[light node] (a3t-3x) [below=0.1cm of a2t-3x] {$\ldots$};
	\node[dark node] (a1t-2x) [right=0.1cm of a1t-3x] {\tiny $a_1^{t-2}$};
	\node[dark node] (a2t-2x) [right=0.1cm of a2t-3x] {\tiny $a_2^{t-2}$};
	\node[light node] (a3t-2x) [right=0.1cm of a3t-3x] {\tiny $a_3^{t-2}$};
	\node[dark node] (a1t-1x) [right=0.1cm of a1t-2x] {\tiny $a_1^{t-1}$};
	\node[dark node] (a2t-1x) [right=0.1cm of a2t-2x] {\tiny $a_2^{t-1}$};
	\node[light node] (a3t-1x) [right=0.1cm of a3t-2x] {\tiny $a_3^{t-1}$};
	\node[dark node] [left=0.1cm of a1t-3x] {\tiny $a_1^{0}$};
	\node[dark node] [left=0.1cm of a2t-3x] {\tiny $a_2^{0}$};
	\node[light node] [left=0.1cm of a3t-3x] {\tiny $a_3^{0}$};
	\node[light node] (a1t-3y) [right=4cm of a1t-3x] {$\ldots$};
	\node[light node] (a2t-3y) [below=0.1cm of a1t-3y] {$\ldots$};
	\node[light node] (a3t-3y) [below=0.1cm of a2t-3y] {$\ldots$};
	\node[dark node] (a1t-2y) [right=0.1cm of a1t-3y] {\tiny $a_1^{t-2}$};
	\node[dark node] (a2t-2y) [right=0.1cm of a2t-3y] {\tiny $a_2^{t-2}$};
	\node[dark node] (a3t-2y) [right=0.1cm of a3t-3y] {\tiny $a_3^{t-2}$};
	\node[dark node] (a1t-1y) [right=0.1cm of a1t-2y] {\tiny $a_1^{t-1}$};
	\node[dark node] (a2t-1y) [right=0.1cm of a2t-2y] {\tiny $a_2^{t-1}$};
	\node[dark node] (a3t-1y) [right=0.1cm of a3t-2y] {\tiny $a_3^{t-1}$};
	\node[light node] [left=0.1cm of a1t-3y] {\tiny $a^0_1$};
	\node[light node] [left=0.1cm of a2t-3y] {\tiny $a^0_2$};
	\node[light node] [left=0.1cm of a3t-3y] {\tiny $a^0_3$};
	\node[light node] (a1t-3z) [right=4cm of a1t-3y] {$\ldots$};
	\node[light node] (a2t-3z) [below=0.1cm of a1t-3z] {$\ldots$};
	\node[light node] (a3t-3z) [below=0.1cm of a2t-3z] {$\ldots$};
	\node[light node] (a1t-2z) [right=0.1cm of a1t-3z] {\tiny $a_1^{t-2}$};
	\node[light node] (a2t-2z) [right=0.1cm of a2t-3z] {\tiny $a_2^{t-2}$};
	\node[light node] (a3t-2z) [right=0.1cm of a3t-3z] {\tiny $a_3^{t-2}$};
	\node[dark node] (a1t-1z) [right=0.1cm of a1t-2z] {\tiny $a_1^{t-1}$};
	\node[dark node] (a2t-1z) [right=0.1cm of a2t-2z] {\tiny $a_2^{t-1}$};
	\node[light node] (a3t-1z) [right=0.1cm of a3t-2z] {\tiny $a_3^{t-1}$};
	\node[light node] [left=0.1cm of a1t-3z] {\tiny $a^0_1$};
	\node[light node] [left=0.1cm of a2t-3z] {\tiny $a^0_2$};
	\node[light node] [left=0.1cm of a3t-3z] {\tiny $a^0_3$};	
	\end{tikzpicture}
	\caption{Shading shows the information about past and current actions available to node 3 at time $t$ given different  reaction function restrictions. Left: self-independent. Node 3 can see the entire record of other nodes' past actions, but not its own. The length of this record gives the current timestamp $t$. Center: $2$-recall. Node 3 can see only the two most recent states. Unless the reaction function is stationary, it may also use the value of the current timestamp. Right: self-independent and historyless. Node 3 can only see other nodes' most recent actions and cannot even see the value of the current timestamp.}
	\label{fig:restrictions}
\end{figure}
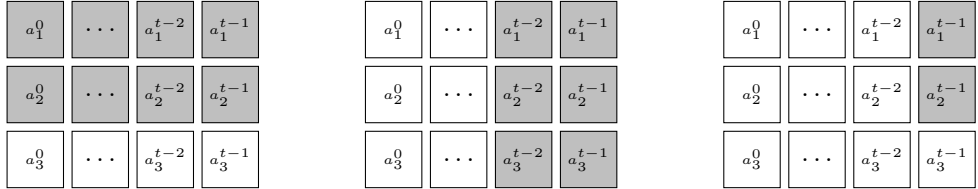
\section{General Non-convergence Result}\label{ssec:bounded}
We now present our main theorem, a general impossibility result for convergence of nodes' actions under bounded-recall dynamics in asynchronous, distributed computational environments.

\begin{theorem}[Main theorem]\label{thm:main}
	In an interaction system where every reaction function is self-independent and has bounded recall, the existence of multiple limit states implies that the system is not convergent.
\end{theorem}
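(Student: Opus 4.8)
The plan is to reduce the bounded-recall case to the historyless argument of Thm~\ref{thm:hlessmain} by treating windows of $k$ consecutive states as the fundamental objects. Since each $f_i$ has $k_i$-recall, I set $k=\max_i k_i$ so that every reaction function has $k$-recall; a \emph{configuration} is then a $k$-history $H=(\bm{a}^0,\ldots,\bm{a}^{k-1})\in A^k$, and activating a set $S$ advances $H$ to the next $k$-history by appending one new state (updating exactly the coordinates in $S$) and dropping the oldest. I would then replay the three-lemma structure on configurations: call a configuration \emph{committed} to $b$ if every fair continuation from it converges to $b$, and prove the analogues of Lem~\ref{lem:uncom} (an uncommitted configuration exists) and Lem~\ref{lem:extend} (from any uncommitted configuration one can activate any prescribed node and still reach an uncommitted configuration), so that committed configurations can be dodged forever along a fair trajectory.

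The one genuinely new device is the replacement for the ``immediate merge'' used in Lem~\ref{lem:commit}. Two configurations $H,H'$ that agree off node $i$ (that is, $H_{-i}=H'_{-i}$) no longer coincide after a single activation of $i$, since node $i$'s window still records its differing past actions and those actions are visible to the other nodes. The fix is a \emph{$k$-step flush}: activate the singleton $\{i\}$ for $k$ consecutive steps. At each such step the non-activated nodes merely repeat their (identical) current actions, so their windows stay identical in the two trajectories; meanwhile, by self independence, node $i$'s newly chosen action depends only on the other nodes' windows and so agrees in both. After $k$ steps node $i$'s window consists entirely of these agreeing new actions, and the two configurations coincide. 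This yields the analogue of Lem~\ref{lem:commit}: committed configurations differing in a single node's window are committed to the same state. With it, Lem~\ref{lem:uncom} lifts almost verbatim (any two configurations are joined by $n$ single-node-window changes, so if all were committed they would share one target $b$), and Lem~\ref{lem:extend} lifts using the flushed configuration as the bridge that transfers commitment between the $\sigma$-branch and the $i$-activating branch of Fig~\ref{fig:traj}.

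The place where steady states behave differently from stable states, and where the historyless endgame needs a fresh idea, is the conclusion: a fair trajectory may \emph{visit} a steady state without converging, so visiting one is not itself the desired contradiction. Instead I would observe that if a fair trajectory converges to $a$, then from some time on every node is activated while the state remains $a$, forcing $g_i$ to return $a_i$ on the all-$a$ window for each $i$; hence the constant configuration $(a,\ldots,a)$ is fixed under \emph{every} activation set and is therefore committed to $a$. Consequently any convergent fair trajectory must eventually reach a committed configuration, so the fair, committed-avoiding trajectory built from the two lemmas cannot converge---contradicting the assumption of a convergent system with two distinct steady states.

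The hard part is non-stationarity: bounded recall permits dependence on the time counter, so committedness is really a property of a (configuration, time) pair rather than of a configuration alone. The flush, uncommittedness, and extension arguments survive this, because every comparison I make is between two trajectories driven by the same schedule from the same time, so the time arguments always agree. The fixed-point step in the conclusion, however, genuinely uses stationarity: when $g_i$ may depend on $t$, the identity that makes $(a,\ldots,a)$ fixed need only hold at the particular times the converging schedule activated $i$, so that configuration need not be committed and a committed-avoiding trajectory might still converge. Closing this gap is exactly what the valency / decision-protocol argument of Sec~\ref{sec:axiomatic} is built to do, so for the fully general (time-dependent) statement I would route the endgame through that machinery rather than through the direct fixed-point observation.
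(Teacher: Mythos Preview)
Your approach is essentially correct and parallels the paper's argument at a more concrete level: your $k$-step flush is exactly the paper's ``color equality'' (Lem.~\ref{lem:ceq}), and your valency argument on (configuration, time) pairs is the direct analogue of the valency argument on protocol runs (Thm.~\ref{thm:decisioniod}). The paper's abstraction to decision protocols is not strictly needed to prove Thm.~\ref{thm:main}; it is there to expose the kinship with FLP consensus (Sec.~\ref{sec:consensus}). For the theorem in isolation, your direct lift of the historyless proof is a legitimate and somewhat more elementary alternative.

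Two remarks. First, your description of the Lem.~\ref{lem:extend} lift is incomplete. The bridge from $V^{t-1}$ to $V^t$ needs more than the flushed configuration $W^{t-1}$: you must also use the one-step $\{i\}$-activation $Z$ of $U^{t-1}$ (so that $W^{t-1}$ is reachable from $Z$, forcing $Z$ to inherit $W^{t-1}$'s target), and then a chain argument connecting $Z$ to $V^t$ through the configurations obtained by activating sets $S$ with $\{i\}\subseteq S\subseteq\sigma(t)\cup\{i\}$ at $U^{t-1}$, consecutive members of which differ in only one node's window. The flushed configuration alone does not transfer commitment across the branch.

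Second, you correctly locate the non-stationary gap but overestimate what is needed to close it. The missing ingredient is isolated in the paper as Lem.~\ref{lem:strongconv} (strong convergence): in a convergent system with self-independent bounded-recall reactions, every fair trajectory eventually reaches a (configuration, time) pair from which \emph{all} fair continuations converge to the same state---that is, a committed pair. Its proof is a direct schedule-interleaving construction (if convergence were not strong, one could splice schedules to build a non-convergent fair trajectory) and makes no use of the decision-protocol abstraction. With this one lemma added, your committed-avoiding fair trajectory already contradicts convergence in the non-stationary case exactly as in the stationary one; there is no need to route the endgame through the full machinery of Sec.~\ref{sec:axiomatic}.
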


We prove this theorem in Section~\ref{ssec:proof} by using a valency argument. In Section~\ref{ssec:tight}, we show that the hypotheses of Theorem~\ref{thm:main} are necessary. We then discuss in Section~\ref{ssec:consensus} the connections of this work to the famous result of Fischer et al.~\cite{FLP85} on the impossibility of resilient consensus. 

Note that system convergence is closely related to \emph{self stabilization}, which is a guarantee that the system will reach and remaining within a set of \emph{legitimate states}. For a set $L\subseteq A$, we say that a system \emph{self-stabilizes to} $L$ if, for every fair trajectory $\bm{a}^0,\bm{a}^1,\bm{a}^2,\ldots$, there is some $T\in\Z_+$ such that, for every $t>T$, $\bm{a}^t\in L$. In systems satisfying its hypotheses, Theorem~\ref{thm:main} precludes self stabilization to any set containing only committed states.

\subsection{Proof of Main Theorem}\label{ssec:proof}

In proving this theorem, we use the following sequence of lemmas. We first show in Lemma~\ref{lem:kto1} that it is sufficient to consider systems with $1$-recall reaction functions. Then in Lemma~\ref{lem:strongconv}, we argue that such a system can be convergent only if every fair trajectory has no committed prefix. To show the existence of a fair trajectory with no committed prefix, we show that in any such system, uncommitted histories exist (Lemma~\ref{lem:uncom}), and can be extended to a longer uncommitted histories in a way that activates any given node (Lemma~\ref{lem:extend}). This means that committed prefixes can be avoided forever on a trajectory that activates every node infinitely many times, i.e., a fair trajectory.

\begin{lemma}\label{lem:kto1}
	If there exists a convergent interaction system with bounded-recall, self-independent reaction functions and multiple limit states, then there is also a convergent interaction system with $1$-recall, self-independent reaction functions and multiple limit states.
\end{lemma}
\begin{proof}
	Assume that $\Gamma=(n,A,\bm{f})$ is a convergent system with self-independent, $k$-recall reaction functions and multiple limit states, for some $k\in\Z_+$. Consider a $1$-recall system $\Gamma'=(n,A',\bm{f}')$, where $A'=A_1^k\times\ldots\times A_n^k$ and $\bm{f}':A'\times\Z_+\to A'$ is given by
	\[f_i'(
	(a^1_1,\ldots,a^k_1),\ldots,(a^1_n,\ldots,a^k_n),t)=\big[f_i((a^1_1,\ldots,a^1_n),\ldots,(a^k_1,\ldots,a^k_n),kt)\big]^k\,.\]
	Informally, a state in $\Gamma'$ is the transpose of a $k$-history for $\Gamma$. The reaction function $f'_i$ applies $f_i$ to this transpose and repeats the output $k$ times. Notice that $\Gamma'$ has self-independent reaction functions. Furthermore, if $(a_1,\ldots,a_n)$ is a limit state of $\Gamma$, then $((a_1,\ldots,a_1),\ldots,(a_n,\ldots,a_n))$ is a limit state of $\Gamma'$, so $\Gamma'$ also has multiple limit states.
	
	Let $\sigma:\Z_+\to2^{[n]}$ be a fair schedule, and let $H\in (A^k)^\ell$ be a history of $\Gamma'$ for some $\ell\in\Z_+$. Define the schedule $\sigma':\Z_+\to2^{[n]}$ by
	\[\sigma'(t)=\left\{\begin{array}{ll}\sigma(t/k)&t/k\in\Z_+\\\emptyset&\text{otherwise}\,.\end{array}\right.\]
	Notice that $\sigma'$ is also fair. Let $H'\in A^{k\ell}$ be the history for $\Gamma$ formed by concatenating the $k$-tuples in $H$. It is easy to see that the $(H,\sigma)$-trajectory of $\Gamma'$ converges if and only if the $(H',\sigma')$-trajectory of $\Gamma$ converges. Since we assumed that $\Gamma$ is convergent, it follows that $\Gamma'$ is also.
\end{proof}

\begin{lemma}\label{lem:strongconv}
	Let $\Gamma$ be a convergent system with self-independent, $1$-recall reaction functions. Then every fair trajectory in $\Gamma$ has a committed finite prefix.
\end{lemma}
\begin{proof}
	Assume there exist some history $H$ and fair schedule $\sigma$ for $\Gamma$ such that the $(H,\sigma)$-trajectory converges to a state $\bm{a}=(a_1,\ldots,a_n)$ but has no committed finite prefix. We will construct a fair schedule $\sigma^\prime$ such that the $(H,\sigma^\prime)$-trajectory does not converge, giving a contradiction.
	
	Let $\bm{u}^0,\bm{u}^1,\bm{u}^2,\ldots$ be the $(H,\sigma)$-trajectory. Then there is some $t_0\in\Z_+$ such that $\bm{u}^t=\bm{a}$ for all $t\geq t_0$. The fairness of $\sigma$ implies that there is some $t_1> t_0$ such that every node is activated by $\sigma$ between $t_0$ and $t_1$, i.e., $\bigcup_{t_0<t<t_1}\sigma(t)=[n]$. By assumption, $(\bm{u}^0,\ldots,\bm{u}^{t_1})$ is not committed to $\bm{a}$, which means there is some time $t_2\geq t_1$ and node $i\in[n]$ such that $f_i(\bm{a},t_2)\neq a_i$.
	The fairness of $\sigma$ also implies that there is some $t_3> t_2$ such that $i\in\sigma(t_3)$. Since $t_3\geq t_0$, we must have $f_i(\bm{a},t_3)=a_i$. By self-independence, then, $f_i(\bm{a}',t_3)=a_i$ for all $\bm{a}'$ such that $\bm{a}'_{-i}=\bm{a}_{-i}$.
	
	We use these facts to iteratively build our fair schedule $\sigma^\prime$. In the $(H,\sigma^\prime)$-trajectory $\bm{v}^0,\bm{v}^1,\bm{v}^2,\ldots$, the system will repeatedly enter and exit the state $\bm{a}$.
	First, let $\sigma^\prime(t)=\sigma(t)$ for all $1\leq t\leq t_0$, so that $\bm{v}^{t_0}=\bm{a}$. Define $\sigma'(t)$ on values $t_0<t\leq t_2$ as follows.
	\[\sigma'(t)=\left\{\begin{array}{ll}\sigma(t)&t_0<t<t_2\\
	\{i\}&t_2\leq t\leq t_3\,.\end{array}\right.\]
	By our choices of $t_0$, $t_2$, and $t_3$, this partial schedule activates every node and induces a segment \[(\bm{v}^{t_0+1},\ldots,\bm{v}^{t_3})\]
	of the $(H,\sigma^\prime)$-trajectory such that $\bm{v}^{t}=\bm{a}$ whenever $t_0<t<t_2$ or $t=t_3$, but $\bm{v}^{t_2}\neq \bm{a}$.
	
	Now set $t_0=t_3$, select new $t_1$, $t_2$, $i$, and $t_3$ relative to this $t_0$, and iterate this process to define $\sigma'(t)$ for all values $t\in\Z_+$. Notice that $\sigma'$ is fair and that the $(H,\sigma^\prime)$-trajectory $\bm{v}^0,\bm{v}^1,\bm{v}^2,\ldots$ does not converge, which contradicts the assumption that the system is convergent. Therefore every fair trajectory in the system must have a committed finite prefix.
\end{proof}

We will use the following consequence of self independence in the course of proving Lemmas~\ref{lem:uncom} and~\ref{lem:extend}.

\begin{obs}\label{obs:commit}
	Let $H'=(\bm{a}^0,\ldots,\bm{a}^{\ell})$ and $H'=(\bm{b}^0,\ldots,\bm{b}^{\ell})$ be committed histories in an system with self-independent, 1-recall reaction functions. If $\bm{a}_{-i}^{\ell}=\bm{b}_{-i}^{\ell}$ for some $i\in[n]$, then $H$ and $H'$ are committed to the same limit state.
\end{obs}
\begin{proof}
	Let $H=(\bm{a}^0,\ldots,\bm{a}^{\ell})$ and $H'=(\bm{b}^0,\ldots,\bm{b}^{\ell})$ be committed histories such that, for some $i\in[n]$, $\bm{a}^\ell_{-i}=\bm{b}^\ell_{-i}$, as in Fig.~\ref{fig:commit}. Let $\sigma$ be any fair schedule such that ${\sigma(\ell+1)=\{i\}}$, and consider the $(H,\sigma)$- and $(H^\prime,\sigma)$-trajectories. When node $i$ is activated, it will choose the same action regardless of whether the history is $H$ or $H'$, by self independence. As the reaction functions have $1$-recall, this means that both these trajectories are identical after time $\ell+1$. Thus, since $H$ and $H'$ are both committed, they must be committed to the same limit state.
\end{proof}
\begin{figure}[ht]
	\centering
	\begin{tikzpicture}[scale=1,main node/.style={ellipse,draw,thick,minimum size=0.75cm,inner sep=0pt}]
	\node[main node] (a) {$\bm{a}^\ell=(a^\ell_1,\ldots,a^\ell_i,\ldots,a^\ell_n)$};
	\node[main node] (aprime) [below of=a]{$\bm{b}^\ell=(a^\ell_1,\ldots,b^\ell_i,\ldots,a^\ell_n)$};
	\node[main node] (c) [below right = 0cm and 2.5cm of a]{};
	\path[draw,thick,->]
	(a) edge node [above] {$\{i\}$} (c)
	(aprime) edge node [below] {$\{i\}$} (c)
	;
	\end{tikzpicture}
	\caption{Activating $\{i\}$ from $H=(\bm{a}^0,\ldots,\bm{a}^{\ell})$ or $H'=(\bm{b}^0,\ldots,\bm{b}^{\ell})$ will have the same outcome.}
	\label{fig:commit}
\end{figure}

\begin{lemma}\label{lem:uncom}
	Every interaction system with 1-recall, self-independent reaction functions and more than one limit state has at least one uncommitted history.
\end{lemma}
\begin{proof}
	Suppose that every history of length one is committed, and consider two such histories $(\bm{a})=((a_1,\ldots,a_n))$ and $(\bm{b})=((b_1,\ldots,b_n))$. Observation~\ref{obs:commit} implies that, for all $1\leq i<n$, the histories $((a_1,\ldots,a_{i-1},b_{i},\ldots,b_n))$ and $((a_1,\ldots,a_i,b_{i+1},\ldots,b_n))$ are committed to the same limit state, and therefore that $\bm{a}$ and $\bm{b}$ are committed to the same limit state. Thus, all histories of length one must be committed to the same limit state, and it follows that all histories must be committed to the same limit state. This contradicts the system having more than one limit state.
\end{proof}
\begin{lemma}\label{lem:extend}
	Let $(n,A,\bm{f})$ be an interaction system with self-independent, 1-recall reaction functions and more than one limit state, let $H= (\bm{a}^0, \ldots, \bm{a}^{\ell-1})\in A^\ell$ be an uncommitted history, for some $\ell\in\Z_+$, and let $i\in[n]$ be a node. Then there exist some $t\geq\ell$ and schedule $\sigma$ such that $i\in\sigma(t)$ and the length-$(t+1)$ prefix of the $(H,\sigma)$-trajectory is uncommitted.
\end{lemma}
\begin{proof}
	Assume for contradiction that no such $t$ and $\sigma$ exist. Consider all histories that result from activating a set containing $i$ at history $H$. By assumption, each of these histories is committed. Notice that for all $S\subseteq[n]$ and $j\in[n]$, the states $\bm{f}_S(H)$ and $\bm{f}_{S\cup\{j\}}(H)$ can only differ at coordinate $j$. Hence, we can iteratively apply Observation~\ref{obs:commit}, much as in the proof of Lemma~\ref{lem:uncom}, to see that all these histories must be committed to the same limit state, which we call $\bm{b}$.
	
	Let $\sigma$ be any fair schedule, and let $\bm{a}^0,\bm{a}^1,\ldots$ be the $(H,\sigma)$-trajectory. For each $t\in\Z_+$, let $H^t=(\bm{a}^0,\ldots,\bm{a}^t)$, and notice that $H^{\ell-1}=H$. For each $t\geq\ell$, 
	let $\bm{v}^t=\bm{f}_{\sigma(t)\cup\{i\}}(H^{t-1})$, and let $I^t=(H^{t-1},\bm{v}^{t})$.
	Since $i\in\sigma(t)\cup\{i\}$, our assumption implies that each history $I^t$ is committed. Let $\bm{w}^t=\bm{f}_{\{i\}}(H^{t-1})$, and note that by self independence, $\bm{f}_{\{i\}}(I^{t-1})=\bm{w}^t$ also, as illustrated in Fig.~\ref{fig:traj}. Let $J^t=(I^{t-1},\bm{w}^{t})$.
	
	We now show by induction on $t$ that, for every $t\geq\ell$, the history $I^t$ is committed to $\bm{b}$. This holds for $t=\ell$ by our definition of $\bm{b}$. Fix $t>\ell$, and suppose that $I^{t-1}$ is committed to $\bm{b}$. Then $J^{t}$ is also committed to $\bm{b}$. Consider all histories that result from activating a set containing $i$ at history $H^{t-1}$. As before, our assumption implies that all these histories are committed, and iterative application of Observation~\ref{obs:commit} shows that they are all committed to the same limit state. In particular, $I^t$ must be committed to the same limit state as $J^{t}$, namely $\bm{b}$.
	
	Since $\sigma$ is a fair schedule, there is some time $t$ for which $i\in\sigma(t)$. For this $t$, we have $H^t=I^t$, so $H^t$ is committed to $\bm{b}$. Thus for every fair schedule $\sigma$, the $(H,\sigma)$-trajectory converges to $\bm{b}$, contradicting the assumption that $H$ is uncommitted. We conclude that our assumption was false and that the lemma holds.
\end{proof}

\begin{figure}[h]
	\centering
	\begin{tikzpicture}[scale=1,main node/.style={circle,draw,thick,minimum size=1cm,inner sep=0pt}]
	\node[main node] (a) {$H$};
	\node[main node] (u1) [right = 3cm of a] {$H^{\ell}$};
	\node[main node] (w1) [below = 0.75cm of u1] {$J^{\ell+1}$};
	\node[main node] (v1) [left = 1.5cm of w1] {$I^{\ell}$};
	\node[main node] (u2) [right = 3cm of u1] {$H^{\ell+1}$};
	\node[main node] (w2) [below = 0.75cm of u2] {$J^{\ell+2}$};
	\node[main node] (v2) [left = 1.5cm of w2] {$I^{\ell+1}$};
	\node (rightdots) [below right = 0.5cm and 0cm of u2] {$\displaystyle{\ldots}$};
	\path[draw,thick,->]
	(a) edge node [above] {$\sigma(\ell)$} (u1)
	(a) edge node [right] {$\sigma(\ell)\cup\{i\}$} (v1)
	(u1) edge node [left] {$\{i\}$} (w1)
	(v1) edge node [below] {$\{i\}$} (w1)
	(u1) edge node [above] {$\sigma(\ell+1)$} (u2)
	(u1) edge node [right] {$\sigma(\ell+1)\cup\{i\}$} (v2)
	(u2) edge node [left] {$\{i\}$} (w2)
	(v2) edge node [below] {$\{i\}$} (w2)
	;
	\end{tikzpicture}
	\caption{All histories in the bottom row are committed to the same limit state $\bm{b}$.}
	\label{fig:traj}
\end{figure}
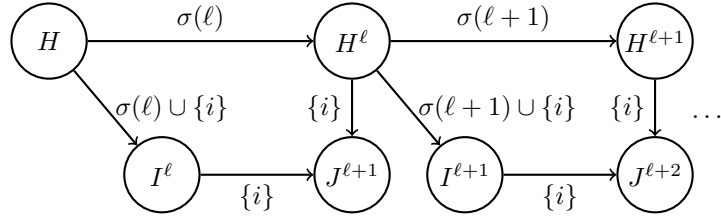

\begin{proof}[Proof of Theorem~\ref{thm:main}]

It follows from Lemmas~\ref{lem:uncom} and~\ref{lem:extend} that, in every system with $1$-recall, self-independent reaction functions and multiple limit states, it is be possible to activate each node infinitely many times without ever reaching a committed history. This means that every such system has a fair trajectory with no committed prefix. By Lemma~\ref{lem:strongconv}, this implies that no such system may be convergent. The theorem follows immediately by  Lemma~\ref{lem:kto1}.\footnote{Although our primary focus is on discrete state spaces, we note that in continuous metric spaces, the standard notion of convergence only requires indefinite approach; the limit point might never be reached. Accordingly, given a metric $d$ on an infinite state space $A$, one could modify Definition~\ref{def:conv} to say that a trajectory $\mathbf{a}^0,\mathbf{a}^1,\mathbf{a}^2,\ldots$ converges to a state $\mathbf{b}$ if for every $\varepsilon>0$ there exists some $T\in\Z_+$ such that, for all $t>T$, $d(\mathbf{a}^t,\mathbf{b})<\varepsilon$. If we require every limit state to have a committed neighborhood, then our proof of Theorem~\ref{thm:main} still holds in this setting.}
\end{proof}

\subsection{Tightness of Main Theorem}\label{ssec:tight}
The following two examples demonstrate that the statement of Theorem~\ref{thm:main} does not hold if either the self-independence restriction or the bounded-recall restriction is removed.

\begin{example}\label{ex:iindependent} {\it The self-independence restriction cannot be removed.}
	
	\noindent Consider a system with one node, with action space $\{\alpha,\beta\}$. When activated, the node always re-selects its own current action. Observe that the system is convergent despite having two limit states.
\end{example}

\begin{example}\label{ex:not-consensus} {\it The bounded-recall restriction cannot be removed.}
	
	\noindent Consider a system with two nodes, $1$ and $2$, each with the action space $\{\alpha,\beta\}$. The self-independent reaction functions of the nodes are as follows: node 2 always chooses node 1's action; node 1 will choose $\beta$ if node 2's action changed from $\alpha$ to $\beta$ in the past, and $\alpha$ otherwise. Observe that node 1's reaction function has unbounded recall: it depends on the entire history of interaction. We make the observations that the system is convergent and has two limit states. Observe that if node 1 chooses $\beta$ at some point in time due to the fact that node 2's action changed from $\alpha$ to $\beta$, then it will continue to do so thereafter; if, on the other hand, $1$ never does so, then from some point in time onwards, node 1's action is constantly $\alpha$. In both cases, node 2 will have the same action as node 1 eventually, and thus convergence to one of the two limit states, $(\alpha,\alpha)$ and $(\beta,\beta)$, is guaranteed. Hence, two limit states exist and the system is convergent nonetheless. Notice also that node 1's reaction functions requires only two states, so the bounded-recall restriction cannot be replaced by a memory restriction.
\end{example}
	\subsection{Connection to Consensus Protocols}\label{ssec:consensus}

We now discuss the relationship of our main result (Theorem~\ref{thm:main}) to the seminal result of Fischer et al. on the impossibility of \emph{fault-resilient consensus protocols}. The consensus problem is fundamental to distributed computing research. We give a brief description of it here, and we refer the reader to~\cite{FLP85} for a detailed explanation of the model.

Fischer et al. studied an environment in which a group of \emph{processes}, each with an initial value in $\{0,1\}$, communicate with each other via \emph{messages}. The objective is for all \emph{non-faulty} processes to eventually agree on some \emph{consensus} value $x\in\{0,1\}$, where $x$ must match the initial value of some process. Fischer et al. established that no consensus protocol is resilient to even a single failure. Their proof of this breakthrough non-termination result introduced the idea of a \emph{valency} argument. They showed that there exists some initial configuration that is \emph{bivalent}, meaning that the resulting consensus could be either $0$ and $1$ (the outcome depends on the asynchronous schedule of message transmission), and that this bivalence can be maintained. Our proof of Theorem~\ref{thm:main} also uses a valency argument, where uncommitted histories play the role of bivalent configurations.

Intuitively, the risk of protocol non-termination in the environment studied by Fischer et al. stems from the possibility of failures; a computational node cannot tell whether another node is silent due to a failure or is simply taking a long time to react. Our non-convergence result concerns environments in which nodes/communication channels do not fail. Thus, each node is guaranteed that all other nodes will eventually react. Observe that in such an environment reaching a consensus is easy; one pre-specified node $i$ (the ``dictator'') waits until it learns all other nodes' inputs (this is guaranteed to happen as failures are impossible) and then selects a value $v_i$ and informs all other nodes; then, all other nodes select $v_i$. By contrast, the possibility of non-convergence shown in Theorem~\ref{thm:main} stems from limitations on nodes' behaviors. Hence, there is no immediate translation from the result of Fischer et al. to ours (and vice versa).

	\section{Applications: Games, Circuits, Social Networks, and Routing}\label{ssec:examples}

We present implications of our impossibility result, Theorem~\ref{thm:main}, for several well-studied environments: game dynamics, circuit design, social networks, and Internet protocols. For most of these applications, the reaction functions are historyless. Recalling Definition~\ref{def:stationary}, this means that they react only to the current state of the system.

\subsection{Asynchronous Game Dynamics}\label{ssec:gamedynamics}
Traditionally, work in game theory on game dynamics (e.g., best-response dynamics) relies on the explicit or implicit premise that players' behavior is somehow synchronized (in some contexts play is sequential, while in others it is simultaneous). Here, we consider the realistic scenario that there is no computational center than can synchronize players' selection of strategies. We describe these dynamics in the setting of this work and exhibit an impossibility result for best-response, and more general, dynamics.

A \emph{game} is characterized by a triple $(n,S,\bm{u})$. There are $n$ \emph{players}, $1,\ldots,n$. Each player $i$ has a \emph{strategy set} $S_i$. $S=S_1\times\ldots\times S_n$ is the space of \emph{strategy profiles} $\bm{s}=(s_1,\ldots,s_n)$. Each player $i$ has a \emph{utility function} $u_i:S\to\R$, where $\bm{u}=(u_1\ldots u_n)$. Intuitively, player $i$ ``prefers'' states for which $u_i$ is higher. Informally, a player is \emph{best responding} when it has no incentive to unilaterally change its strategy.
\begin{definition}
	In a game $U=(n,S,\bm{u})$, player $i$ is \emph{best responding} at $\bm{s}\in S$ if ${u_i(\bm{s})\geq u_i(\bm{s}^\prime)}$ for every $\bm{s}^\prime\in S$ such that $s_{-i}=s^\prime_{-i}$. We write $s_i\in BR^U_i(\bm{s})$.
	A strategy profile $\bm{s}\in S$ is a \emph{pure Nash equilibrium} (\emph{PNE}) if every player is best responding at $\bm{s}$.
\end{definition}

There is a natural relationship between games and the interaction systems described in Section~\ref{sec:asynchronous}. A player with a strategy set corresponds directly to a node with an action space, and a strategy profile may be viewed as a state. These correspondences are so immediate that we often use these terms interchangeably.

Consider the case of \emph{best-response dynamics} for a game in which best responses are unique (a \emph{generic} game): starting from some arbitrary strategy profile, each player chooses its unique best response to other players' strategies when activated. Convergence to pure Nash equilibria under best-response dynamics is the subject of extensive research in game theory and economics, and both positive~\cite{R73,MS96} and negative~\cite{HarMas03,HarMas06} results are known. If we view each player $i$ in a game $(n,S,\bm{u})$ as a node in an interaction system, then under best-response dynamics its utility function $u_i$ induces a self-independent historyless reaction function $f_i:S_{-i}\to S_i$, as long as best responses are unique. Formally,
$$f_i(\bm{a}_{-i})=\argmax_{\alpha\in S_i}u_i(a_1,\ldots,\alpha,\ldots,a_n)\,.$$

Conversely, any system with historyless and self-independent reaction functions can be described as following best-response dynamics for a game with unique best responses. Given  reaction functions $f_1,\ldots,f_n$, consider the game where each player $i$'s utility function is given by
$$
u_i(\bm{a})=
\left\{
\begin{array}{ll}
1&\;\mbox{if }f_i(\bm{a})=a_i\\
0&\;\mbox{otherwise}\,.
\end{array}
\right.	
$$
Best-response dynamics on this game replicate the dynamics induced by those reaction functions. Thus historyless and self-independent dynamics are exactly equivalent to best-response dynamics. Since pure Nash equilibria are fixed points of these dynamics, the historyless case of Theorem~\ref{thm:main} may be restated in the following form.

\begin{theorem}\label{thm:bestresponse}
	If there are two or more pure Nash equilibria in a game with unique best responses, then asynchronous best-response dynamics can potentially oscillate indefinitely.
\end{theorem}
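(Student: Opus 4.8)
The plan is to invoke the equivalence between best-response dynamics and historyless, self-independent interaction systems that was just established, and then apply Theorem~\ref{thm:hlessmain} directly. First I would fix a game $(n,S,\bm{u})$ with unique best responses and at least two pure Nash equilibria, and view it as an interaction system by taking each strategy set $S_i$ as the action space of node $i$ and defining the reaction function $f_i(\bm{a}_{-i})=\argmax_{\alpha\in S_i}u_i(a_1,\ldots,\alpha,\ldots,a_n)$. Uniqueness of best responses guarantees that each $f_i$ is well-defined and depends only on $\bm{a}_{-i}$, so the system is historyless and self-independent, exactly as in the forward direction of the correspondence described above.

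The central observation is that the stable states of this interaction system coincide with the pure Nash equilibria of the game. A state $\bm{a}$ is stable precisely when $f_i(\bm{a}_{-i})=a_i$ for every $i$, i.e., when each player's current strategy is its unique best response to the others---which is exactly the condition that $\bm{a}$ is a PNE. Hence the assumption of two or more pure Nash equilibria translates directly into the existence of two or more stable states in the associated system.

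With these two pieces in place, Theorem~\ref{thm:hlessmain} applies immediately: a historyless, self-independent system with more than one stable state cannot be convergent. Unwinding the definitions, non-convergence means there is some initial strategy profile and some fair schedule whose trajectory fails to settle at any single state. Since the schedule is fair, every player is activated infinitely often, yet the induced sequence of strategy profiles never remains constant from any point onward; this is precisely the claim that best-response dynamics can oscillate indefinitely.

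I expect no serious obstacle here, since the theorem is essentially a reinterpretation of Theorem~\ref{thm:hlessmain} through the game/interaction-system dictionary. The only point requiring care is the translation of terminology: confirming that ``oscillate indefinitely'' is the intended reading of the failure of convergence (rather than a stronger periodicity claim), and making explicit that fairness is what rules out the trivial non-convergence caused by simply never activating some player. Neither of these requires any new argument beyond the definitions of fair trajectory and convergence given earlier.
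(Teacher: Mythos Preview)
Your proposal is correct and matches the paper's approach exactly: the paper presents this theorem as a direct restatement of Theorem~\ref{thm:hlessmain} via the game/system dictionary established immediately beforehand, noting that pure Nash equilibria are precisely the fixed points (stable states) of the induced historyless, self-independent dynamics. No additional argument is given or needed beyond what you describe.
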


In fact, best-response dynamics are just one way to derive reaction functions from utility functions, i.e., to translate preferences into behaviors. In general, a \emph{game dynamics protocol} is a mapping from games to systems that makes this translation. Given a game $(n,S,\bm{u})$ as input, the protocol selects reaction functions $\bm{f}=(f_1,\ldots,f_n)$, and returns an interaction system $(n,S,\bm{f})$. The above non-convergence result holds for a large class of these protocols. In particular, it holds for bounded-recall and self-independent game dynamics, whenever pure Nash equilibria are limit states. When cast into game-theoretic terminology, Theorem~\ref{thm:main} says that if players' choices of strategies are not synchronized, then the existence of two (or more) pure Nash equilibria implies that this broad class of game dynamics are not guaranteed to reach a pure Nash equilibrium. This result should be contrasted with positive results for such dynamics in the traditional synchronous game-theoretic environments. In particular, this result applies to best-response dynamics with bounded recall and consistent tie-breaking rules (studied by Zapechelnyuk~\cite{Z08}).

\begin{theorem}
	If there are two or more pure Nash equilibria in a game with unique best responses, then all bounded-recall self-independent dynamics for which those equilibria are fixed points can fail to converge in asynchronous environments.
\end{theorem}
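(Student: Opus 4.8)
The plan is to realize the given dynamics as an interaction system and then apply Thm.~\ref{thm:main}, so that essentially all the substantive work is already done. By hypothesis we are handed a bounded-recall, self-independent game dynamics protocol; applied to the game $(n,S,\bm{u})$ it produces an interaction system $(n,S,\bm{f})$ in which every reaction function $f_i$ is self-independent and has bounded recall, say $k$-recall. Thus the structural hypotheses of Thm.~\ref{thm:main} hold automatically, and the only thing left to verify is that two or more pure Nash equilibria give rise to two or more \emph{steady states} of this system.

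The key step is to show that every fixed point of the dynamics is a steady state. I would let $\bm{a}$ be one of the pure Nash equilibria, which by assumption is a fixed point of $\bm{f}$, and consider the constant history $H=(\bm{a},\ldots,\bm{a})\in S^k$ together with any fair schedule $\sigma$. Arguing by induction on $t$, the $(H,\sigma)$-trajectory stays constant at $\bm{a}$: whenever the $k$ most recent states all equal $\bm{a}$, the fixed-point property yields $f_i(\bm{a},\ldots,\bm{a})=a_i$ for every activated node $i$, while each non-activated node retains $a_i$; hence the next state is again $\bm{a}$ and the $k$-history stays constant. This constant trajectory converges to $\bm{a}$, so $\bm{a}$ is a steady state by definition.

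Since distinct strategy profiles are distinct states, two or more pure Nash equilibria yield two or more steady states. Invoking Thm.~\ref{thm:main}, the system is not convergent: there is a fair trajectory that does not converge, which is precisely a failure to converge in an asynchronous environment. This completes the reduction.

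I expect the only genuine subtlety---the ``hard part''---to be purely definitional: pinning down what ``fixed point'' means for a bounded-recall reaction function (namely, invariance of the constant $k$-history under $\bm{f}$) and checking that this is exactly the invariance needed for the steady-state witness above. Once that alignment is made, the theorem is an immediate corollary of Thm.~\ref{thm:main}, with no further combinatorial argument required; all of the real difficulty lives in Thm.~\ref{thm:main} itself rather than in this translation from preferences to behaviors.
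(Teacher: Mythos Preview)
Your proposal is correct and matches the paper's approach: the paper presents this theorem as an immediate recasting of Thm.~\ref{thm:main} into game-theoretic language, with the only bridging observation being that pure Nash equilibria which are fixed points of the dynamics are steady states of the associated interaction system. Your explicit verification via the constant $k$-history and induction is exactly the right way to make that bridge precise, and your identification of the definitional subtlety (what ``fixed point'' should mean for possibly non-stationary bounded-recall $\bm f$, namely $f_i((\bm a,\ldots,\bm a),t)=a_i$ for all $t$) is the one thing worth stating carefully, since $k$-recall functions may still depend on the time counter.
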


\subsection{Asynchronous Circuits}
The implications of asynchrony for circuit design have been extensively studied in computer architecture research~\cite{DN97}. By regarding each logic gate as a node executing an inherently historyless and self-independent reaction function, we show that an impossibility result for stabilization of asynchronous circuits follows from Theorem~\ref{thm:main}.

In this setting there is a Boolean circuit, represented as a directed graph $G$, in which the vertices represent the circuit's inputs and logic gates, and the edges represent the circuit's connections. The activation of the logic gates is asynchronous. That is, the gates' outputs are initialized in some arbitrary way, and then the update of each gate's output, given its inputs, is uncoordinated and unsynchronized. A \emph{stable Boolean assignment} in this framework is an assignment of Boolean values to the circuit inputs and the logic gates that is consistent with each gate's truth table. We say that a Boolean circuit is \emph{inherently stable} if it is guaranteed to converge to a stable Boolean assignment regardless of the initial Boolean assignment.

To show how Theorem~\ref{thm:main} applies to this setting, we model an asynchronous circuit with a fixed input as a historyless interaction system with self-independent reaction functions. Every node in the system has action space $\{0,1\}$. There is a node for each input vertex, which has a constant (and thus self-independent) reaction function. For each logic gate, the system includes a node whose reaction function implements the logic gate on the actions of the nodes corresponding to its inputs. If any gate takes its own output directly as an input, we model this using an additional identity gate; this means that all reaction functions are self-independent. Since every stable Boolean assignment corresponds to a limit state of this system, this instability result follows from Theorem~\ref{thm:main}:

\begin{theorem}
	If two or more stable Boolean assignments exist for an asynchronous Boolean circuit with a given input, then that asynchronous circuit is not inherently stable on that input.
\end{theorem}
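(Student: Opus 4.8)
The plan is to invoke Thm.~\ref{thm:hlessmain} directly, with the bulk of the work going into verifying that the historyless system described in the text faithfully encodes the asynchronous circuit, so that its stable states are exactly the circuit's stable Boolean assignments. First I would fix the circuit and its given input and build the system $(n,A,\bm{f})$ as specified: one node per input vertex, carrying a constant reaction function, and one node per logic gate whose reaction function applies that gate's truth table to the actions of the nodes feeding into it, with every action space equal to $\{0,1\}$. Since a constant function ignores all coordinates, and a gate node's output depends only on the actions of its in-neighbors, every reaction function is self-independent provided no gate reads its own output. I would dispatch the exceptional case of a gate feeding directly into itself by the identity-gate construction mentioned in the text: insert a fresh node computing the identity on that gate's output and route the self-input through it, so that self independence holds for every node of the resulting system.

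Next I would establish the correspondence between stable states and stable Boolean assignments. A state is stable exactly when it is a fixed point of $\bm{f}$, i.e., when every node already outputs what its reaction function would produce; by construction this says precisely that each gate's output agrees with its truth table applied to its inputs, which is the definition of a stable Boolean assignment. For the augmented system, in any fixed point each inserted identity node carries the same value as the gate it copies, so fixed points of the augmented system are in bijection with fixed points of the original circuit; hence the augmented system has exactly one stable state for each stable Boolean assignment and no others. In particular, two or more stable Boolean assignments yield two or more stable states.

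With these facts in place the conclusion is immediate. Identifying ``inherently stable on that input'' with convergence of the system---every fair trajectory eventually remaining at a single stable state, i.e., every fair asynchronous schedule driving the circuit to a stable assignment regardless of the initial Boolean assignment---the hypothesis of two or more stable Boolean assignments exhibits a historyless, self-independent system with more than one stable state. By Thm.~\ref{thm:hlessmain} no such system is convergent, so some fair trajectory fails to converge; translated back through the correspondence, some fair schedule and initial assignment prevent the circuit from settling at any stable assignment, witnessing that it is not inherently stable.

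I expect the only real obstacle to be bookkeeping rather than mathematical depth. One must check that the identity-gate augmentation neither creates spurious stable states nor destroys existing ones, so that the ``two or more'' hypothesis transfers intact to the system; and one must confirm that fairness of the schedule in the interaction-system model coincides with the intended notion of asynchronous gate updates in which every gate is revisited infinitely often. Both are routine once the encoding is pinned down, but they are the points where the argument could silently go wrong.
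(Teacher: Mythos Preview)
Your proposal is correct and follows essentially the same approach as the paper: model the circuit as a historyless interaction system with self-independent reaction functions (using the identity-gate insertion for self-loops), observe that stable Boolean assignments correspond to stable states, and invoke Thm.~\ref{thm:hlessmain}. If anything, you are more careful than the paper, which simply asserts the correspondence without the bijection check for the augmented system or the fairness translation; your attention to these bookkeeping points is warranted but does not deviate from the paper's argument.
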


\subsection{Diffusion of Technologies in Social Networks}
Understanding the ways in which innovations, ideas, technologies, and practices disseminate through social networks is fundamental to the social sciences. We consider the classic economic setting~\cite{Morris00} (which has lately also been approached by computer scientists~\cite{IKMW07}) where each decision maker has two technologies $\{X,Y\}$ to choose from and wishes to have the same technology as the majority of its ``friends'' (neighboring nodes in the social network). We exhibit a general asynchronous instability result for this environment.

In this setting there is a social network of users, represented by a connected graph in which users are the vertices and edges correspond to friendship relationships. There are two competing technologies, $X$ and $Y$. Each user will repeatedly reassess his choice of technology, at timesteps separated by arbitrary finite intervals. When this happens, the user will select $X$ if at least half of his friends are using $X$ and otherwise select $Y$. A ``stable global state'' is a fixed point of these choice functions, meaning that no user will ever again switch technologies. Observe that if every user has chosen $X$ or every user has chosen $Y$, then the system is in a stable global state.

The dynamics of this diffusion can be described as asynchronous best-response dynamics for the game in which each player's utility is $1$ if his choice of technology is consistent with the majority (with ties broken in favor of $X$) and $0$ otherwise. This game has unique best responses, and the strategy profiles $(X,\ldots,X)$ and $(Y,\ldots,Y)$ are both pure Nash equilibria for this game. Thus Theorem~\ref{thm:bestresponse} implies the following result.
\begin{theorem}
	In every social network with at least one edge, the diffusion of technologies can potentially oscillate indefinitely.
\end{theorem}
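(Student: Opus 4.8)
The plan is to verify the two hypotheses of Thm.~\ref{thm:bestresponse} for the game already described---unique best responses and the existence of at least two pure Nash equilibria---and then invoke that theorem directly. The diffusion dynamics are exactly the asynchronous best-response dynamics of the game in which player $i$ earns utility $1$ when its chosen technology matches the majority of its friends (ties broken toward $X$) and $0$ otherwise, so once the two hypotheses are checked the oscillation conclusion is immediate.

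First I would confirm that best responses are unique. Fixing the technologies of all other users, player $i$'s payoff depends only on whether at least half of its friends currently use $X$. Exactly one of $X$, $Y$ maximizes $u_i$ under the stated tie-breaking convention, so $BR^U_i(\bm{s})$ is a singleton for every profile $\bm{s}$; the game is therefore generic in the sense required by Thm.~\ref{thm:bestresponse}.

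Next I would exhibit two distinct pure Nash equilibria. The all-$X$ profile $(X,\ldots,X)$ is always a PNE: every user sees all of its friends using $X$, so $X$ is the unique best response for each user. The all-$Y$ profile $(Y,\ldots,Y)$ is the candidate second equilibrium: each user sees zero friends using $X$, so provided that user has at least one friend, strictly fewer than half of its friends use $X$ and $Y$ is its unique best response. Since $(X,\ldots,X)\neq(Y,\ldots,Y)$, these are two distinct PNE, and Thm.~\ref{thm:bestresponse} then yields the claimed potential for indefinite oscillation.

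The one point requiring care---and the main (if modest) obstacle---is the role of the hypothesis that the network has at least one edge. For $(Y,\ldots,Y)$ to be a PNE we need every user to have at least one friend; an isolated user would face a vacuous $0$-out-of-$0$ tie, which the convention resolves in favor of $X$, giving it an incentive to deviate from $Y$ and destroying that equilibrium. Because the social network is connected, having at least one edge forces it to contain at least two vertices and hence no isolated vertex, so every user indeed has a neighbor. This is precisely where, and the only place where, the edge hypothesis is used, and it is the step I would be most careful to state explicitly before appealing to Thm.~\ref{thm:bestresponse}.
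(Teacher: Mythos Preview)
Your proposal is correct and follows essentially the same approach as the paper: verify that the diffusion game has unique best responses and that $(X,\ldots,X)$ and $(Y,\ldots,Y)$ are both pure Nash equilibria, then invoke Thm.~\ref{thm:bestresponse}. Your explicit justification of why the edge hypothesis (together with connectedness) is needed to ensure $(Y,\ldots,Y)$ is a PNE is more detailed than the paper's terse statement, but the underlying argument is the same.
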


\subsection{Interdomain Routing}
\emph{Interdomain routing} is the task of establishing routes between the smaller networks, or \emph{autonomous systems} (ASes), that make up the Internet. It is handled by the \emph{Border Gateway Protocol} (BGP). We abstract a recent result of Sami et al.~\cite{SSZ09} concerning BGP non-convergence and show that this result extends to several BGP-based \emph{multipath routing} protocols that have been proposed in the past few years.

In the standard model for analyzing BGP dynamics~\cite{GSW02}, there is a network of \emph{source} ASes that wish to send traffic to a unique \emph{destination} AS $d$. Each AS $i$ has a \emph{ranking function} $<_i$ that specifies $i$'s strict preferences over all simple (loop-free) routes leading from $i$ to $d$. Each AS also has an \emph{export policy} that specifies which routes it is willing to make available to each neighboring AS. Under BGP, each AS constantly selects the ``best'' route that is available to it (see~\cite{GSW02} for more details). \emph{BGP safety}, i.e., guaranteed convergence to a stable routing outcome, is a fundamental desideratum that has been the subject of extensive work in both the networking and the standards communities. We now cast interdomain routing into the terminology of Section~\ref{sec:asynchronous} to obtain a non-termination results for BGP as a corollary of Theorem~\ref{thm:main}.

Each AS acts as a computational node. The action space of each node $i$ is the set of all simple routes from $i$ to the destination $d$, together with the empty route $\emptyset$. For every state $(R_1,\ldots,R_n)$ of the system, $i$'s reaction function $f_i$ considers the set of routes $S=\{(i,j)R_j : j\mbox{ is }i\mbox{'s neighbor and }R_j\mbox{ is exportable to }i\}$.
If $S$ is empty, $f_i$ returns $\emptyset$. Otherwise, $f_i$ selects the route in $S$ that is optimal with respect to $<_i$. Observe that this reaction function is deterministic, self-independent, and historyless, and that a stable routing tree is a limit state of this system. Thus Theorem~\ref{thm:main} implies the following result of Sami et al.

\begin{theorem}[Sami et al.~\cite{SSZ09}]\label{thm:sami}
	If there are multiple stable routing trees in a network, then BGP is not safe on that network.
\end{theorem}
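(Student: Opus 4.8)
The plan is to apply the historyless non-convergence result, Theorem~\ref{thm:hlessmain}, directly to the interaction system that models BGP dynamics, which has already been constructed in the paragraph preceding the statement. First I would confirm that this construction meets the two hypotheses of Theorem~\ref{thm:hlessmain}. Self-independence is immediate from the definition of the reaction function: node $i$ builds its candidate set $S=\{(i,j)R_j : j\text{ is }i\text{'s neighbor and }R_j\text{ is exportable to }i\}$ using only the routes $R_j$ currently advertised by its neighbors, never consulting its own current route $R_i$, so $f_i$ factors through the projection $\bm{a}\mapsto\bm{a}_{-i}$ in the sense of Def.~\ref{def:hlessselfind}. The historyless property is equally direct, since $f_i$ reads only the present state $(R_1,\ldots,R_n)$ and no earlier states.

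The next step is to translate the two routing-theoretic notions appearing in the statement into the vocabulary of Sec.~\ref{sec:historyless}. A stable routing tree is, by definition, a route assignment in which every AS is already selecting the best available route exported to it by its neighbors, which is exactly the fixed-point condition $f_i(\bm{a})=a_i$ for all $i$; hence a stable routing tree is precisely a stable state of the system. Dually, BGP safety means that every fair activation sequence of the ASes eventually settles at a single routing outcome, which is exactly the convergence of every fair trajectory in the sense of Def.~\ref{def:conv}. Under this dictionary, ``multiple stable routing trees'' becomes ``more than one stable state,'' and ``BGP is not safe'' becomes ``the system is not convergent.''

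With both hypotheses verified and both notions translated, the conclusion is immediate: Theorem~\ref{thm:hlessmain} states that no historyless system with self-independent reaction functions and more than one stable state is convergent, so the presence of multiple stable routing trees forces a fair trajectory that oscillates forever, i.e., BGP is not safe. I expect the only point requiring genuine care to be the verification of self-independence, since one must check that prepending the edge $(i,j)$ to a neighbor's route, together with the exportability and loop-freeness filtering that determines $S$, really depends only on $\bm{a}_{-i}$ and never on $i$'s own advertised route; the loop-freeness test, for instance, asks whether $i$ already appears in $R_j$, which is a property of $R_j$ alone. Once this is granted, the theorem is an essentially immediate corollary of Theorem~\ref{thm:hlessmain}.
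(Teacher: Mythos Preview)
Your proposal is correct and matches the paper's approach exactly: the paper presents the theorem as an immediate corollary of Theorem~\ref{thm:hlessmain}, having already observed in the preceding paragraph that the BGP reaction functions are deterministic, self-independent, and historyless, and that stable routing trees are stable states. Your write-up is in fact more explicit than the paper's in checking self-independence (e.g., your remark about loop-freeness), but the underlying argument is identical.
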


Importantly, the asynchronous model of Section~\ref{sec:asynchronous} is significantly \emph{more restrictive} than that of Sami et al., so the result implied by Theorem~\ref{thm:main} is stronger. Theorem~\ref{thm:main} implies an even more general non-safety result for routing protocols that depend on ASes that act self-independently and with bounded recall. In particular, this includes recent proposals for BGP-based \emph{multi-path routing} protocols that allow each AS to send traffic along multiple routes, e.g., R-BGP~\cite{KKKM07} and Neighbor-Specific BGP (NS-BGP)~\cite{YSR09}.

\subsection{Congestion Control}
We now consider the fundamental task of congestion control in communication networks, which is achieved through a combination of mechanisms on \emph{end-hosts} (e.g., TCP) and on \emph{switches}/\emph{routers} (e.g., RED and WFQ). We briefly describe the model of congestion control studied by Godfrey et al.~\cite{gszs10sigmetrics}.

There is a network of routers, represented by a directed graph $G$, in which vertices represent routers, and edges represent communication links. Each edge $e$ has capacity $c_e$. There are $n$ \emph{source-target pairs} of vertices $(s_i,t_i)$, termed ``\emph{connections}'', that represent communicating pairs of end-hosts. Each source-target pair $(s_i,t_i)$ is connected via some \emph{fixed} route, $R_i$. Each source $s_i$ transmits at a \emph{constant} rate $\gamma_i>0$.\footnote{This is modeled via the addition of an edge $e=(u,s_i)$ to $G$, such that $c_e=\gamma_i$, and $u$ has no incoming edges.} For each of a router's outgoing edges, the router has a \emph{queue management}, or \emph{queuing}, policy, that dictates how the edge's capacity should be allocated between the connections whose routes traverse that edge. The network is asynchronous, so routers' queuing decisions can be made simultaneously. An \emph{equilibrium} of the network is a global configuration of edges' capacity allocation such that the incoming and outgoing flows on each edge are consistent with the queuing policy for that edge. Godfrey et al. show that, while one might expect flow to be received at a constant rate whenever it is transmitted at a constant rate, this is not necessarily the case. Indeed, Godfrey et al. present examples in which connections' throughputs can potentially fluctuate ad infinitum, never converging an equilibrium.

We model such a network as a historyless interaction system to show that \emph{every} network with multiple equilibria can oscillate indefinitely. The computational nodes of the system are the edges. The action space of each edge $e$ intuitively consists of all possible ways to divide traffic going through $e$ between the connections whose routes traverse $e$. More formally, for every edge $e$, let $N(e)$ be the number connections whose paths go through $e$. Then $e$'s action space is $A_e=\left\{\left(x_1,\ldots,x_{N(e)}\right)\;:\;\text{each}\ x_i\geq0\ \text{and}\ \Sigma_i x_i\leq c_e\right\}$.
We assume that the $x_i$ have bounded precision, meaning that the state space is finite.

Edge $e$'s reaction function $f_e$ models the queuing policy according to which $e$'s capacity is shared: for every $N(e)$-tuple of nonnegative incoming flows $(w_1,\ldots, w_{N(e)})$, $f_e$ outputs an action $(x_1,\ldots,x_{N(e)})\in A_e$ such that for every $i\in [N(e)]$, $x_i \leq w_i$---a connection's flow leaving the edge cannot exceed its flow entering the edge. These reaction functions are historyless and self-independent, and an equilibrium of the network is a limit state of this system. Using Theorem~\ref{thm:main}, then, we can obtain the following impossibility result.

\begin{theorem}
	If there are multiple capacity-allocation equilibria in the network, then dynamics of congestion control can potentially oscillate indefinitely.
\end{theorem}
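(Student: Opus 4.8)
The plan is to reduce this statement to the historyless non-convergence result, Theorem~\ref{thm:hlessmain}, using exactly the interaction system described immediately above: its nodes are the edges of $G$, the action space at edge $e$ is the allocation set $A_e$, and the reaction function $f_e$ implements $e$'s queuing policy on the incoming flows. To invoke Theorem~\ref{thm:hlessmain} I must verify three things: that each $f_e$ is self-independent, that each $f_e$ is historyless, and that the capacity-allocation equilibria coincide exactly with the stable states (fixed points of $\bm{f}$) of this system. The theorem then delivers non-convergence whenever there are two or more stable states, i.e., two or more equilibria.

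First I would make the flow-propagation map explicit. For a state $\bm{a}=(a_e)_e$, the flow that connection $i$ delivers into edge $e$ equals the amount that its fixed route $R_i$ carries up to $e$, namely the allocation that the predecessor of $e$ along $R_i$ assigns to connection $i$ (or the constant source rate $\gamma_i$ if $e$ is the first edge of $R_i$). Thus the incoming-flow vector $(w_1,\ldots,w_{N(e)})$ that feeds $f_e$ is a function of the current state alone. This immediately gives the historyless property, since $f_e$ consults no past states and no time counter.

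The key step, and the one I expect to be the main obstacle, is self-independence: I must show that these incoming flows never depend on $a_e$ itself, so that $f_e(\bm{a})$ is a function of $\bm{a}_{-e}$ only. Because each route $R_i$ is a simple (loop-free) path, edge $e$ occurs at most once on $R_i$ and is therefore never its own predecessor along any connection's route; consequently each $w_j$ is determined by the allocation of some edge other than $e$ (or by a source constant), and $f_e$ ignores $a_e$. This is the analogue of the identity-gate device used above for asynchronous circuits: any apparent dependence of an edge on its own output would have to be routed through another node, which preserves self-independence. The one case to dispatch separately is a source edge, where the incoming flow is the fixed rate $\gamma_i$ and is hence trivially independent of every action.

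Finally I would identify equilibria with fixed points. A capacity-allocation equilibrium is precisely a state in which every edge's outgoing allocation is already consistent with its queuing policy applied to its incoming flows, i.e., $f_e(\bm{a})=a_e$ for all $e$, which is exactly the condition $\bm{f}(\bm{a})=\bm{a}$ defining a stable state. Hence multiple equilibria yield multiple stable states, and Theorem~\ref{thm:hlessmain} guarantees an initial allocation and a fair schedule whose trajectory never converges. Since each edge's allocation determines the throughput delivered on each connection traversing it, such a non-convergent fair trajectory exhibits connection throughputs that oscillate indefinitely, as claimed.
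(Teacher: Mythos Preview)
Your proposal is correct and follows exactly the approach the paper takes: model the network as a historyless interaction system with edges as nodes and queuing policies as reaction functions, observe that equilibria are stable states, and invoke Theorem~\ref{thm:hlessmain}. In fact you supply more justification than the paper does---the paper simply asserts that the reaction functions are historyless and self-independent, whereas you spell out why the incoming-flow vector at $e$ depends only on predecessor edges along each route and hence not on $a_e$ itself.
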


	\section{Complexity of Asynchronous Dynamics}\label{ssec:complexity}

We now turn to the communication complexity and computational complexity of determining whether a system is convergent. We present hardness results in both models of computation even for the case of historyless interaction. Our computational complexity result shows that even if nodes' reaction functions can be succinctly represented, determining whether the system is convergent is PSPACE-complete. Alongside its computational implications, this intractability result implies that (unless PSPACE $\subseteq$ NP) we cannot hope to have short, efficiently verifiable certificates that guarantee a system's convergence.

\subsection{Communication Complexity}

The following result shows that, in general, determining whether a system is convergent cannot be done efficiently.
\begin{theorem}\label{thm:com-cxity}
	Determining if a system with $n$ nodes, each with $2$ actions, is convergent requires communicating
	$\Omega(2^{n})$ bits. This holds even if all nodes have historyless and self-independent
	reaction functions.
\end{theorem}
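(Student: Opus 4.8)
The plan is to reduce from two-party \textsc{Set-Disjointness}. Recall that if Alice and Bob hold subsets $X,Y$ of a universe $U$ and must decide whether $X\cap Y=\emptyset$, then any deterministic protocol exchanges $\Omega(|U|)$ bits. I will construct a historyless, self-independent system on $n$ nodes, each with action space $\{0,1\}$, whose reaction functions are split across the two parties so that the system is convergent if and only if $X\cap Y=\emptyset$. Nodes $1$ and $2$ are ``control'' nodes: Alice is given $f_1$ (encoding $X$) and Bob is given $f_2$ (encoding $Y$), while the ``address'' nodes $3,\ldots,n$ have fixed, publicly known reaction functions whose joint action encodes a point $z$ of the $(n-2)$-cube. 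To respect self-independence at the candidate fixed points, I take $U$ to be an even-weight (hence minimum-distance-$2$) code inside $\{0,1\}^{n-2}$; since $|U|=2^{n-3}=\Omega(2^n)$, the reduction yields the claimed bound. Because the code has distance $2$, no two codewords agree on all but one coordinate, so each address node $i$ can hold its current codeword $z$ by a well-defined rule computing $z_i$ from the remaining coordinates $z_{-i}$, even though $f_i$ may not read $a_i$.

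Concretely, I will arrange that $0^n$ is always a stable state and that the only other possible stable states have the form $(1,1,z)$ with $z\in U$. The control nodes act as a gate: $f_1(a_2,z)=1$ exactly when $a_2=1$ and $z\in X$, and $f_2(a_1,z)=1$ exactly when $a_1=1$ and $z\in Y$; otherwise they output $0$. The address nodes hold their codeword while both control bits are $1$ and otherwise reset toward $0^{n-2}$. Then $(1,1,z)$ is stable iff $z\in X$ and $z\in Y$, i.e.\ iff $z\in X\cap Y$. Thus if $X\cap Y\neq\emptyset$ there are at least two stable states, and Thm~\ref{thm:hlessmain} immediately gives non-convergence. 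Conversely, if $X\cap Y=\emptyset$, no state $(1,1,z)$ is stable, and I claim $0^n$ is a global attractor: sustaining a control bit at $1$ requires the current codeword to lie in the corresponding set, so disjointness forces every control bit that is activated to eventually fall to $0$, after which the activated address nodes reset and the system is absorbed at $0^n$.

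A deterministic protocol that decides convergence of the constructed system therefore decides \textsc{Set-Disjointness}: placing node $1$'s reaction function (and the public address functions) on Alice's side of the cut and node $2$'s on Bob's side, the total communication of any convergence protocol is at least the communication crossing this cut, which is $\Omega(|U|)=\Omega(2^n)$. The main obstacle is the disjoint case: Thm~\ref{thm:hlessmain} only converts ``many stable states'' into non-convergence, so I cannot infer convergence merely from uniqueness of the stable state (indeed the snake construction of Thm~\ref{thm:exp-r} has a unique stable state yet oscillates). I must instead rule out \emph{every} fair oscillation directly, for all adversarial-but-fair schedules. The delicate point is that the control bits can flicker $0\!\to\!1\!\to\!0$ while the address drifts, so I will establish a monotonicity argument showing that once both control bits are simultaneously $0$ the address is trapped at $0^{n-2}$ and the control bits can never both return to $1$, forcing convergence to $0^n$ along every fair trajectory.
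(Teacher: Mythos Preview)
Your reduction idea---encode \textsc{Set-Disjointness} in the control nodes and invoke Thm.~\ref{thm:hlessmain} when $X\cap Y\neq\emptyset$---is appealing, but the converse direction breaks. The ``hold'' rule for address nodes is forced by self-independence and the distance-$2$ code: for every $z_{-i}$ there is a unique codeword completion, so $f_i(1,1,z_{-i})$ \emph{must} be $\mathrm{parity}(z_{-i})$. At an odd-weight address with $a_1=a_2=1$, this rule does not hold the address; it \emph{flips} each activated address bit. Activating any even-sized set of address nodes therefore moves the address to an arbitrary odd-weight vector, and a single further activation lands on any chosen codeword. This lets the address climb back up after it has been reset downward, destroying your monotonicity argument.

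Concretely, for $n=6$ with address $z\in\{0,1\}^4$, take $X=\{1111\}$ and $Y=\{0011\}$ (disjoint). The following fair cycle never visits $a_1=a_2=0$:
\[
(1,1,1111)\xrightarrow{\{1\}}(1,1,1111)\xrightarrow{\{2\}}(1,0,1111)\xrightarrow{\{3,4\}}(1,0,0011)\xrightarrow{\{2,5\}}(1,1,0001)
\]
\[
\xrightarrow{\{3,4\}}(1,1,1101)\xrightarrow{\{5,6\}}(1,1,1110)\xrightarrow{\{6\}}(1,1,1111).
\]
Every node is activated in each pass, yet the system oscillates. So your system is \emph{not} convergent for these disjoint $X,Y$, and the reduction does not decide disjointness. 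The paper avoids this trap by a quite different construction: it keeps a \emph{unique} stable state throughout and uses a snake-in-the-box in $Q_{n-2}$ so that the address nodes, when both control bits are $0$, cycle along a chordless path of length $\Omega(2^n)$; the snake's chordlessness is exactly what prevents the kind of uncontrolled lateral moves that wreck your parity-based holding rule.
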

	\begin{proof}
		To prove our result we present a reduction from the \emph{2-party set disjointness problem}, a well-known problem in communication complexity theory: There are two parties, Alice and Bob. Each party holds a subset of $[q]$; Alice holds the subset $A$ and Bob holds the subset $B$. The objective is to determine whether $A\cap B=\emptyset$. This problem instance is denoted $\textsc{Disj}^q(A,B)$. The following is well known.
		\begin{theorem}
			Determining whether $A,B\subseteq[q]$ are disjoint requires (in the worst case) the communication of $\Omega(q)$ bits. This lower bound applies to randomized protocols with bounded $2$-sided error and also to nondeterministic protocols.
		\end{theorem}
	
	We now present a reduction from 2-party set disjointness to the question of determining whether a system with historyless and self-independent reaction functions is convergent. Given an instance $\textsc{Disj}^q(A,B)$, we construct a system with $n$ nodes, each with two actions, as follows. (The relation between the parameters $q$ and $n$ is to be specified later.) Let the action space of each node be $\{0,1\}$. We now define the reaction functions of the nodes. Consider the possible action profiles of nodes $3,\ldots,n$, i.e., the set $\{0,1\}^{n-2}$. Observe that this set of actions profiles is the $(n-2)$-hypercube $Q_{n-2}$, and thus can be visualized as the graph whose vertices are indexed by the binary $(n-2)$-tuples and such that two vertices are adjacent if and only if they differ in exactly one coordinate. The reaction functions are based on following a \emph{snake} in this hypercube.
	
	\begin{definition}\label{def:snake}
		A \emph{snake} in a hypercube $Q_n$ is a simple cycle $S=(v_0,\ldots,v_k)$ that is \emph{chordless}, i.e., for each $v_i,v_j$ on $S$, if $v_i$ and $v_j$ are neighbors in $Q_n$, then $v_j\in \{v_{i-1},v_{i+1}\}$.
	\end{definition}

	Let $S$ be a maximal snake in $Q_{n-2}$, and let $q=|S|$. We now show our reduction from $\textsc{Disj}^q$. We identify each element $j\in [q]$ with a unique vertex $v^j\in S$. Without loss of generality, we assume that $0^{n-2}$ is on $S$. For ease of exposition we also assume that $1^{n-2}$ is not on $S$. (Getting rid of this assumption is easy.) Orient the edges in $S$ to form a cycle. For any edge that has exactly one endpoint in $S$, orient the edge toward $S$. An example is given in Fig.~\ref{fig:orient}. Orient all other edges arbitrarily. For each $i=3,\ldots,n$, this orientation induces a function $g_i:Q_{n-2}\to\{0,1\}$, where $g_i(a_3,\ldots,a_n)$ is determined by the direction of the edge $\{(a_3,\ldots a_{i-1},0,a_{i+1},\ldots,n),(a_3,\ldots a_{i-1},1,a_{i+1},\ldots,n)\}$.
	The nodes' self-independent and historyless reaction functions are as follows.
	
	\begin{align*}
	f_1(a_1,\ldots,a_n)&=
	\left\{
	\begin{array}{ll}
	0&\quad\mbox{if }a_2=1\text{ and }(a_3,\ldots,a_n)=v^j\mbox{ for some }j\in A\\
	1&\quad\mbox{otherwise}
	\end{array}
	\right.\\
	f_2(a_1,\ldots,a_n)&=
	\left\{
	\begin{array}{ll}
	0&\quad\mbox{if }a_1=1\text{ and }(a_3,\ldots,a_n)=v^j\mbox{ for some }j\in B\\
	1&\quad\mbox{otherwise}
	\end{array}
	\right.\\
	f_i(a_1,\ldots,a_n)&=
	\left\{
	\begin{array}{ll}
	g_i(a_3,\ldots,a_n)&\quad\mbox{  if }a_1=a_2=0\\
	1&\quad\mbox{ otherwise}
	\end{array}
	\right.
	\end{align*}
	Informally, the aim is for nodes $3,\ldots,n$ to follow the snake $S$ to vertices corresponding to each value $j\in[q]$.
	
	\begin{figure}[h]
		\centering
		\begin{tikzpicture}[scale=0.8, main node/.style={fill=white,ellipse,draw,thick,minimum height=0.7cm,minimum width=1.2cm,inner sep=0pt,opacity=0.75,text opacity=1}]
		\node[main node] (0001) {0001};
		\node[main node] (0011) [right=1cm of 0001] {0011};
		\node[main node] (0101) [above=1cm of 0001] {0101};
		\node[main node] (0111) [right=1cm of 0101] {0111};
		\node[main node] (1001) [above right=0.25cm and 0.75cm of 0001] {1001};
		\node[main node] (1011) [right=1cm of 1001] {1011};
		\node[main node] (1101) [above=1cm of 1001] {1101};
		\node[main node] (1111) [right=1cm of 1101] {1111};
		\node[main node] (0000) [below left=1cm and 0.5cm of 0001]{0000};
		\node[main node] (0010) [right=3cm of 0000] {0010};
		\node[main node] (0100) [above=4cm of 0000] {0100};
		\node[main node] (0110) [right=3cm of 0100] {0110};
		\node[main node] (1000) [above right=0.25cm and 1.5cm of 0000] {1000};
		\node[main node] (1010) [right=3cm of 1000] {1010};
		\node[main node] (1100) [above=4cm of 1000] {1100};
		\node[main node] (1110) [right=3cm of 1100] {1110};
		\path[draw,ultra thick,->]
		(0000) edge (0001) 
		(0001) edge (0011)
		(0011) edge (1011)
		(1011) edge (1010)
		(1010) edge (1110)
		(1110) edge (0110)
		(0110) edge (0100)
		(0100) edge (0000)
		;
		\path[draw,->,thick,dotted]
		(1111) edge (1101)
		(0010) edge (0000)
		(1000) edge (0000)
		(0010) edge (1010)
		(1111) edge (1110)
		(0111) edge (0110)
		(0101) edge (0100)
		(0111) edge (0011)
		(1111) edge (1011)
		(0010) edge (0011)
		(1100) edge (0100)
		(1100) edge (1110)
		(1000) edge (1010)
		(0101) edge (0001)
		(1001) edge (0001)
		(1001) edge (1011)
		(1100) edge (1101)
		(0101) edge (1101)
		(1001) edge (1101)
		(1000) edge (1001)
		(0101) edge (0111)
		(0111) edge (1111)
		(0010) edge (0110)
		;
		\begin{pgfonlayer}{bg}
		\path[draw,->,thick,dotted]
		(1000) edge (1100)
		;
		\end{pgfonlayer}
		\end{tikzpicture}
		\caption{An acceptable orientation of the edges on $Q_4$. The solid edges form a cycle on a maximal snake $S$, and no edge is directed away from $S$.}
		\label{fig:orient}
	\end{figure}
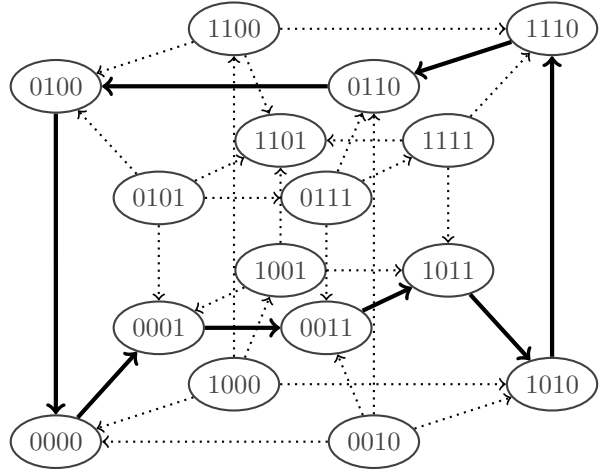
	
	\begin{obs}
		$1^n$ is the unique limit state of the system.
	\end{obs}
	
	In our reduction Alice simulates node 1 (whose reaction function is based on $A$), Bob simulates node 2 (whose reaction function is based on $B$), and one of the two parties simulates all other nodes (whose reaction functions are based on neither $A$ nor $B$). The theorem now follows from the combination of the following two claims.
	
	\begin{claim}\label{claim:infinite}
		In an oscillation there must be infinitely many time steps in which both node 1 and 2's actions are $0$.
	\end{claim}
	\begin{proof}
		Suppose that from some moment forth it is never the case that both node 1 and 2's actions are $0$. Observe that from that time onwards the nodes $3,\ldots,n$ will always choose the action $1$ when activated. Hence, after some time has passed the actions of all nodes in $\{3,\ldots,n\}$ will be $1$. Observe that whenever nodes 1 and 2 are activated thereafter they shall choose the action $1$, so we have convergence to the limit state $1^n$.
	\end{proof}
	
	\begin{claim}
		The system is convergent iff $A\cap B=\emptyset$.
	\end{claim}
	\begin{proof}
		If $A\cap B\neq\emptyset$, initialize the system to a state $(0,0,a_3,\ldots,a_n)$, where $(a_3,\ldots,a_n)=v^1$. Consider a schedule that activates $\{3,\ldots,n\}$ in every timestep until $(a_3,\ldots,a_n)=v^j$ for some $j\in A\cap B$. When that happens, the schedule activates $\{1,2\}$ for two consecutive timesteps, then resumes activating $\{3,\ldots,n\}$. The functions $g_i$ ensure that, for each $j\in[q]$, the vector $(a_3,\ldots,a_n)$ will be equal to $v_j$ within a finite number of timesteps. Since there is some $j\in A\cap B$, nodes 1 and 2 will eventually be activated, so this schedule is fair. This initial state and schedule clearly produce an oscillation, so the system is not convergent in this case.
		
		Now assume for contradiction that  $A\cap B=\emptyset$ and the system is not convergent. we know from Claim~\ref{claim:infinite} that if there is an oscillation, then there are infinitely many time steps in which both node 1 and 2's actions are $0$. We argue that this implies that there must be infinitely many time steps in which both nodes select action $0$ \emph{simultaneously}. Indeed, node 1 only chooses action $0$ if node 2's action is $1$, and vice versa, and so if both nodes never choose $0$ simultaneously, then it is never the case that both nodes' actions are $0$ at the same time step, a contradiction. Now, when is it possible for both $1$ and $2$ to choose $0$ at the same time? Observe that this can only be if the actions of nodes $3,\ldots,n$ constitute an element that is in both $A$ and $B$. Hence, $A\cap B\neq\emptyset$, another contradiction.
	\end{proof}

	We have reduced $\textsc{Disj}^{q}(A,B)$, which requires $\Omega(q)$ bits of communication, to the problem of checking convergence of an $n$-node system with historyless and self-independent reaction functions. As $q$ was defined as the length of a maximal snake in $Q_n$, a classical combinatorial result due to Evdokimov shows that $q=\Omega(2^n)$.
	
	\begin{theorem}[Evdokimov~\cite{Evdokimov1969}]\label{thm:maxsnake}
		Let $z\in\Z_{+}$ be sufficiently large. Then, the size $|S|$ of a maximal snake in the $z$-hypercube $Q_z$ is at least $\lambda2^z$ for some $\lambda>0$.
	\end{theorem}

	This completes the proof of Theorem~\ref{thm:com-cxity}.
	\end{proof}

\subsection{Computational Complexity}

The above communication complexity hardness result required the representation of the reaction functions to (potentially) be exponentially long. What if the reaction functions can be succinctly described? We now present a strong computational complexity hardness result for the case that each reaction function $f_i$ is historyless, and is given explicitly in the form of a Boolean circuit.

\begin{theorem}\label{thm:pspace}
	When the reaction functions are given as Boolean circuits, determining whether a historyless system with $n$ nodes is convergent is PSPACE-complete.
\end{theorem}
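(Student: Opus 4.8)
membership in PSPACE, and PSPACE-hardness via reduction.

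\begin{proof}
The plan is to establish both containment in PSPACE and PSPACE-hardness. For membership, I would first note that every stable state is a fixed point of $\bm f$, so once a trajectory reaches a stable state it stays there; hence a trajectory converges (Def.~\ref{def:conv}) exactly when it reaches a stable state. Consequently the system is \emph{non}-convergent iff there is a non-stable state $\bm b$ and a closed walk from $\bm b$ to $\bm b$ that avoids every stable state and activates each of the $n$ nodes at least once: repeating such a walk is a fair schedule (Def.~\ref{def:fair}) whose trajectory never reaches a stable state, and conversely any non-convergent fair trajectory revisits some state $\bm b$ infinitely often, and by fairness a segment between two visits of $\bm b$ yields exactly such a walk. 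This is checkable in nondeterministic polynomial space: guess $\bm b$, then traverse the state graph one step at a time, storing only the current state, an $n$-bit mask of which nodes have been activated, and a step counter bounded by $2^{2n}$; at each step guess an activation set, evaluate the circuits $f_i$ to compute the successor, and verify it is non-stable (some $f_i(\bm a)\neq a_i$), accepting upon returning to $\bm b$ with a full mask. As this uses $\mathrm{poly}(n)$ space and $\mathrm{PSPACE}=\mathrm{NPSPACE}$, convergence is in PSPACE.

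For hardness I would reduce from the halting problem for deterministic Turing machines confined to polynomial space, which is PSPACE-complete. From $(M,x,s)$ I build a historyless system on $\mathrm{poly}(s)$ binary nodes whose states encode the configurations of $M$ on $x$ within space $s$, together with a small program-counter field; the reaction functions, given as polynomial-size circuits, implement one micro-step of $M$'s transition, with halting configurations arranged to be fixed points and hence stable states. The crucial device for taming adversarial asynchrony is to maintain the invariant that \emph{every non-stable state has a unique enabled node}: a single $i$ with $f_i(\bm a)\neq a_i$, while $f_j(\bm a)=a_j$ for all $j\neq i$. Under this invariant, any activation set either omits $i$ and leaves the state unchanged or contains $i$ and moves the state to its \emph{unique} successor, the other activated nodes acting as the identity; since fairness forces $i$ to fire eventually, every fair trajectory from \emph{every} initial state follows, up to stuttering, one deterministic successor path. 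The system is thus convergent iff this functional successor graph on $\{0,1\}^n$ has no cycle through non-stable states. Encoding each micro-step (writing a cell, moving the head, advancing the counter) so that it touches exactly one node, and funnelling every invalid state one node at a time toward the start configuration, makes the graph acyclic precisely when $M$ halts: then every initial state is driven to a halting fixed point, whereas if $M$ loops the successor path from the start configuration enters the cycle of repeated configurations, witnessed by the fair ``activate-all'' schedule.

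I expect the main obstacle to be exactly this invariant---arranging the encoding, including the counter updates and the treatment of garbage initial states, so that a single node is enabled at \emph{every} non-stable state, uniformly over all of $\{0,1\}^n$ rather than only on valid configurations. Once it holds, the asynchronous behaviour collapses to a deterministic functional graph and the equivalence ``convergent $\iff$ $M$ halts'' is immediate. The reaction functions are local and consult only the program counter, so they admit polynomial-size circuits and the reduction is polynomial-time; together with membership this yields PSPACE-completeness.
\end{proof}
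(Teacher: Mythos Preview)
Your membership argument is sound, and your ``unique enabled node'' device is exactly the mechanism the paper's construction relies on (though the paper does not name it): in their encoding, at each non-halting state either the head node or a single cell node is the only one whose reaction differs from its current action, so asynchrony collapses to a deterministic successor map up to stuttering.

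The gap is in your choice of hard problem. You reduce from halting on a \emph{specific} input $x$, but convergence is a \emph{universal} property over initial states: the system must converge from \emph{every} state in $A$. Your states encode configurations of $M$, and your successor map is essentially $M$'s transition, so for the system to be convergent $M$ must halt from \emph{every} configuration, not just from the start configuration on $x$. Funnelling ``invalid'' states to the start does not close this hole: a syntactically valid configuration that happens to be unreachable from $x$ is not invalid, and if $M$ loops from it the system is non-convergent even though $M$ halts on $x$. Detecting ``unreachable from $x$'' is not locally checkable by a polynomial-size circuit, so you cannot funnel those states either.

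The paper sidesteps this by reducing from the problem \textsc{shc}: given a linear-bounded TM $M$, does $M$ halt from \emph{every} starting configuration? This is PSPACE-hard (via blank-tape halting) and lines up exactly with the universal quantifier in the definition of convergence. With that change your argument goes through; indeed the paper's encoding is simpler than the one you sketch, using a single head node with action space $Q\times\Gamma\times[n]\times\{-1,0,1\}$ and cell nodes with action space $\Gamma$, with no micro-stepped program counter and no funnelling of garbage states needed at all.
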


\begin{proof}
	Our proof is based on the proof of Fabrikant and Papadimitriou~\cite{FP08} that checking BGP safety is PSPACE-complete. Importantly, that result does not imply Theorem~\ref{thm:pspace}, since the model of asynchronous interaction considered in that work does not allow for simultaneous activation of nodes. We prove our theorem by reduction from the problem of determining whether a linear space-bounded Turing machine (TM) will halt from every starting configuration.
	
	For $n\in Z_+$, let $M$ be a TM that can only access the first $n$ tape cells. Let $Q$ be $M$'s machine state space, $\Gamma$ its tape alphabet, and $\delta:Q\times\Gamma\to Q\times\Gamma\times\{-1,0,1\}$ its transition function. A configuration of $M$ is a triple $(q,\bm{a},j)$ where, $q\in Q$ is a machine state, $\bm{a}\in\Gamma^n$ describes the tape contents, and $j\in[n]$ gives the location of the control head.
	
	Then we say that $\langle M,1^n\rangle$ is in \textsc{shc} (for \underline{s}pace-bounded \underline{h}alting from all \underline{c}onfigurations) if, for every configuration, $M$ will halt if it begins its computation from that configuration. As Fabrikant and Papadimitriou~\cite{FP08} argue, the problem of checking whether a space-bounded TM will accept a blank input is reducible to \textsc{shc}, and thus \textsc{shc} is PSPACE-hard.
	
	We now reduce \textsc{shc} to the problem of determining whether a historyless interaction system is convergent. Given $n\in\Z_+$ and a TM $M$ that can only access the first $n$ tape cells, we construct a historyless system that is convergent if and only if $\langle M,1^n\rangle$ is in \textsc{shc}. The system has a \emph{cell node} for each of the first $n$ tape cells and a \emph{head node} to represent the control head. The cell nodes $1,\ldots,n$ each have the action space $\Gamma$, and the head node $n+1$ has action space $Q\times\Gamma\times[n]\times\{-1,0,1\}$. For $\bm{a}\in\Gamma^n$, each cell node $i\in[n]$ has the reaction function
	$$
	f_i(\bm{a},(q,\gamma,j,d))=
	\left\{
	\begin{array}{ll}
	\gamma&\;\mbox{if }i=j\\
	a_i&\;\mbox{otherwise}\,.
	\end{array}
	\right.
	$$
	For the head node, $f_{n+1}(\bm{a},(q,\gamma,j,d))$ is given by the following procedure.
	
	\begin{algorithm}[H]
		\If{$a_j=\gamma$}{\If{$j+d\in[n]$}{$j\gets j+d$\;}$(q,\gamma,d)\gets \delta(q,a_j)$\;}
		\Return{$(q,\gamma,j,d)$\;}
	\end{algorithm}
	
	Observe that $(\bm{a},(q,\gamma,j,d))$ is a limit state of this system if and only if $q$ is a halting machine state for $M$. Suppose the system is convergent, and let $(q,\bm{a},j)$ be a configuration of $M$. Consider the system trajectory that begins at state $(\bm{a},(q,a_j,j,0))$ and follows the schedule that activates every node in every round. This trajectory will reach a limit state, and it corresponds directly to a halting run of $M$ that begins from $(q,\bm{a},j)$, so $\langle M,1^n \rangle$ is in \textsc{shc}.
	
	Conversely, suppose that $\langle M,1^n \rangle$ is in \textsc{shc}, and let $(\bm{a},(q,\gamma,j,d))$ be a system state. Suppose that $a_j=\gamma$. Then consider the computation by $M$ that begins at configuration $(q,\bm{a},j^\prime)$, where $j^\prime=\min\{\max\{j+d,1\},n\}$. This computation will reach a halting machine state, and any fair trajectory from $(\bm{a},(q,\gamma,j,d))$ will go through the same sequence of machine states as this computation, which means that the trajectory will converge to a limit state. If $a_j\neq\gamma$, then the system state will remain the same until $j$ is activated and takes action $\gamma$, after which the above argument applies. Thus the system is convergent. We conclude that checking whether the system is convergent is PSPACE-hard. Since it is straightforward to check convergence in polynomial space, this problem is PSPACE-complete.
\end{proof}

In a preliminary version of this work~\cite{jsw11ics}, we conjectured that the above PSPACE-completeness result also holds for the case of self-independent reaction functions.
\begin{conjecture}
	Determining whether a system with $n$ nodes, each with a deterministic self-independent and historyless reaction function, is convergent is PSPACE-complete.
\end{conjecture}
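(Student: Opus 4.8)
The plan is to reduce from \textsc{shc} exactly as in the proof of Thm.~\ref{thm:pspace}, but to replace the two reaction functions that read a node's own action---the cell nodes' ``retain $a_i$'' branch and the head node's dependence on its own $(q,\gamma,j,d)$---with \emph{self-independent} gadgets that simulate the same read--modify--write step. Membership in PSPACE is immediate, since convergence is a property of the (at most exponentially large) state graph and can be decided by searching for a fair oscillation in polynomial space; as in Thm.~\ref{thm:pspace} the entire content of the statement is the PSPACE-hardness of the reduction.

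First I would handle memory. A cell must preserve its symbol unless the head overwrites it, yet a self-independent function may not read the cell's own action. I would therefore implement each tape cell as a small \emph{latch} gadget: a pair of nodes that store the symbol by copying \emph{each other} (so that each reads its partner rather than itself), together with the head, which writes the new symbol into both members of the pair when it visits that cell. In any quiescent configuration the two members agree and mutually reinforce their common value, so the stored symbol persists; this mirrors the self-independent ``mutual-copy'' functions already exploited in Thm.~\ref{thm:exp-r}.

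Next I would handle the control head. Since the head cannot read its own machine state, I would distribute $(q,\gamma,j,d)$ across a self-independent gadget and add a small number of \emph{phase} nodes whose only role is to serialize each computation step into a read sub-step and a write sub-step. The phase nodes let the gadget recover the current machine state from the cells and the previous phase, apply $\delta$, and commit the result, so that even when the adversary activates several nodes \emph{simultaneously} the net effect of a completed phase cycle is one faithful step of $M$. This is also the point at which one reconciles our model with that of~\citet{FP08}: their BGP reduction already uses self-independent (routing) functions but forbids simultaneous activation, whereas here the phase nodes must make the simulation robust to it.

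The two directions of correctness are then the crux. For non-halting $\Rightarrow$ non-convergence one exhibits the single schedule that drives the gadgets through their phases in lock-step, reproducing a non-terminating run of $M$ as an oscillation. The hard part will be the converse, \textsc{shc} $\Rightarrow$ convergence, which must hold for \emph{every} fair schedule: one has to certify that the adversary cannot use out-of-order or simultaneous activation of the latch and phase nodes to manufacture a limit cycle \emph{inside} the gadgets, disconnected from $M$'s behavior. Proving that every fair trajectory either completes honest phase cycles or is absorbed into the unique steady state---so that the only fair oscillations correspond to genuine non-halting computations---is the delicate step, and is presumably the obstacle that keeps this statement a conjecture rather than a theorem.
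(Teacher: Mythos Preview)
The paper offers no proof of this statement: it is explicitly labeled a \emph{Conjecture}, introduced with ``We conjecture that the above PSPACE-completeness result also holds for the case of self-independent reaction functions,'' and nothing further is said. There is therefore no paper proof to compare your proposal against.

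Your write-up is not a proof but a proof \emph{plan}, and you are candid about this, correctly flagging the ``\textsc{shc} $\Rightarrow$ convergence for every fair schedule'' direction as the unresolved obstacle. That diagnosis matches the paper's stance: the authors evidently did not see how to close that gap either, which is why the statement remains a conjecture.

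One concrete illustration of the difficulty you name: your mutual-copy latch (``two nodes that store the symbol by copying each other'') is stable when the pair agree, but if the adversary ever drives them into disagreement---say during a head write that updates one member before the other---then simultaneous activation makes them \emph{swap} rather than settle, and repeated simultaneous activation of just that pair is a fair oscillation unconnected to $M$'s computation. Any self-independent memory gadget will face some version of this problem, and ruling out all such gadget-internal fair oscillations under arbitrary simultaneous activation is precisely the ``delicate step'' you identify. Your plan is a reasonable line of attack, but as written it does not resolve the conjecture, and neither does the paper.
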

This conjecture has since been proved by Engelberg et al.~\cite{EFSW13}.
	\section{Conclusions and Future Research}\label{sec:conc}

In this paper, we have taken the first steps towards a complete understanding of strategic dynamics in distributed settings. We proved a general non-convergence result and several hardness results within this model. We also discussed some important aspects such as the implications of fairness and randomness, as well as applications to a variety of settings. We believe that we have only scratched the surface in the exploration of the convergence properties of game dynamics in distributed computational environments, and many important questions remain wide open. We now outline several interesting directions for future research.

\vspace{1mm}\noindent{\bf Other limitations, convergence notions, and equilibria.} We have considered particular limitations on reaction functions, modes of convergence, and kinds of equilibria. Understanding the effects of asynchrony on different classes of reaction functions (e.g., uncoupled dynamics, dynamics with outdated information) and for other types of convergence (e.g., of the empirical distributions of play) and equilibria (e.g., mixed Nash equilibria, correlated equilibria) is a broad and challenging direction for future research.

\vspace{1mm}\noindent{\bf Other notions of asynchrony.} We believe that better understanding the role of degrees of fairness, randomness, and other restrictions on schedules from distributed computing literature, in achieving convergence to equilibrium points is an interesting and important research direction.

\vspace{1mm}\noindent{\bf Characterizing asynchronous convergence.} We still lack characterizations of asynchronous convergence even for simple dynamics (e.g., deterministic and historyless). Our PSPACE-completeness result in Section~\ref{ssec:complexity} eliminates the possibility of short witnesses of guaranteed asynchronous convergence unless PSPACE $\subseteq$ NP, but elegant characterizations are still possible.

\vspace{1mm}\noindent{\bf Topological and knowledge-based approaches.} Topological~\cite{BG93,HS99,SZ00} and knowledge-based~\cite{HM90} approaches have been very successful in addressing fundamental questions in distributed computing. Can these approaches shed new light on the implications of asynchrony for strategic dynamics?

\vspace{1mm}\noindent{\bf Further exploring the environments of Section~\ref{ssec:examples}.} We have applied our main non-convergence result to the environments described in Section~\ref{ssec:examples}. These environments are of independent interest and are indeed the subject of extensive research. Hence, the further exploration of dynamics in these settings is important.

	\subsection*{Acknowledgments}
	We thank Danny Dolev, Alex Fabrikant, Idit Keidar, Jonathan Laserson, Nati Linial, Yishay Mansour, Yoram Moses, and two anonymous reviewers for helpful discussions and comments. This work was initiated partly as result of the DIMACS Special Focus on Communication Security and Information Privacy. This work was partially supported by NSF grant CCF-1101690, ISF grant 420/12, the Israeli Center for Research Excellence in Algorithms (I-CORE), and the Office of Naval Research.

	\bibliographystyle{plain}
	\bibliography{dcah}
\end{document}